\theoremstyle{remark}
\begin{document}
 \thispagestyle{empty} 
\onecolumn

\vspace*{10cm}
\textbf{This work  submitted to journal/IEEE transaction for possible publication. Copyright may be transferred without notice, after which this version may no longer be accessible}

\newpage
\title{Gap Reduced Minimum Error Robust Simultaneous Estimation For Unstable Nano Air Vehicle}

\author{Jinraj~V.~Pushpangathan,~
        Harikumar~Kandath,~\IEEEmembership{Member,~IEEE,},
          Suresh~Sundaram,~\IEEEmembership{Senior Member,~IEEE,},
        and~Narasimhan~Sundararajan,~\IEEEmembership{Life Fellow,~IEEE}
\thanks{Research fellow @ Department
of Aerospace Engineering, Indian Institute of Science, Bangalore-560012,
India, e-mail: (jinrajaero@gmail.com).}
\thanks{Assistant professor @ International Institute of Information Technology, Hyderabad, India, e-mail: ( harikumar.k@iiit.ac.in)}
\thanks{Associate professor @ Department
of Aerospace Engineering, Indian Institute of Science, Bangalore- 560012, India, e-mail: (vssuresh@iisc.ac.in).}
\thanks{Professor (Retd.) @ School of Electrical and Electronics Engineering, Nanyang Technological University, Singapore, e-mail: (ensundara@ntu.edu.sg).}
}

\markboth{}%
{Shell \MakeLowercase{\textit{et al.}}: Bare Demo of IEEEtran.cls for IEEE Journals}

\twocolumn
\maketitle

\begin{abstract}

This paper proposes a novel Gap Reduced Minimum Error Robust Simultaneous (GRMERS) estimator for resource-constrained Nano Aerial Vehicle (NAV) that enables a single estimator to provide simultaneous and robust estimation for a given $N$ unstable and uncertain NAV plant models. The estimated full state feedback enables a stable flight for NAV. The GRMERS estimator is implemented utilizing a Minimum Error Robust Simultaneous (MERS) estimator and  Gap Reducing (GR) compensators. The MERS estimator provides robust simultaneous estimation with minimal largest worst-case estimation error even in the presence of a bounded energy exogenous disturbance signal. The GR compensators reduce the \textit{gap} between the \textit{graphs} of $N$ linear plant models to decrease the estimation error generated by the MERS estimator. A sufficient condition for the existence of a simultaneous estimator is established using LMIs and robust estimation theory. Further, MERS estimator and GR compensator design are formulated as non-convex tractable optimization problems and are solved using the population-based genetic algorithms. The performance of the GRMERS estimator consisting of MERS estimator and GR compensators from the population-based genetic algorithms is validated through simulation studies. The study results indicate that a single GRMERS estimator can produce state estimates with reduced errors for all flight conditions. The results indicate that the single GRMERS estimator is robust than the individually designed  $H_{\infty}$ filters.
\end{abstract}

\begin{IEEEkeywords}
Linear matrix inequality, Nano air vehicle, Robust simultaneous estimator, $v$-gap metric 
\end{IEEEkeywords}

\IEEEpeerreviewmaketitle

\section{Introduction}
\raggedbottom

\noindent Recent trends in  Micro Air Vehicles (MAVs) point to the development of a new class of  small air vehicles called Nano Air Vehicles (NAVs) that execute specific missions undetected with a high degree of agility.  NAVs can be broadly classified into three categories, viz., fixed-wing NAVs, rotary wing NAVS, and flapping-wing NAVs. They are widely used for   intelligence operations, battlefield surveillance,  reconnaissance, and disaster assessment missions. These small vehicles have severe dimensional and weight constraints as their overall dimensions and weights need to be lower than $75$~mm and $20$~g, respectively \cite{jinsmc}. Figure \ref{fig:foo} shows  a typical fixed-wing NAV that weighs $19.4$~g and has an overall dimension of $75$~mm \cite{jin}.

\noindent In general, the plant models of these NAVs are multi-input-multi-output (MIMO), unstable, uncertain, 
adversely coupled, and have a different number of unstable poles \cite{jinthesis,nogar}. In this paper, for convenience,  we use the terms \textit{plant} and \textit{estimator} to represent the \textit{plant model} and \textit{estimator} model, respectively. More details about these plant characteristics are given in \cite{jinsmc,jin}. Due to these  plant characteristics, the NAV's require  complex feedback controllers to  accomplish a mission. For using the existing closed-form solutions of the full state feedback for designing a feasible controller for a NAV requires the measurements or estimates of all the system’s state variables. However, due to the weight and dimensional constraints, the autopilot hardware of a NAV like the one shown in Fig. \ref{fig:autopilot} \cite{jin} has severe resource constraints, such unavailability of lightweight sensors to measure every state variable (like translational velocities, angle-of-attack, airspeed, side slip angle). Moreover, these autopilots have both limited computational and memory powers. Hence, one needs to design a computationally simple full state estimator from the available measurements.
\begin{figure}[h!]
	\centering
	\subfigure[Fixed-wing NAV \label{fig:foo}]{\includegraphics[width=1.4in, height=0.9in]{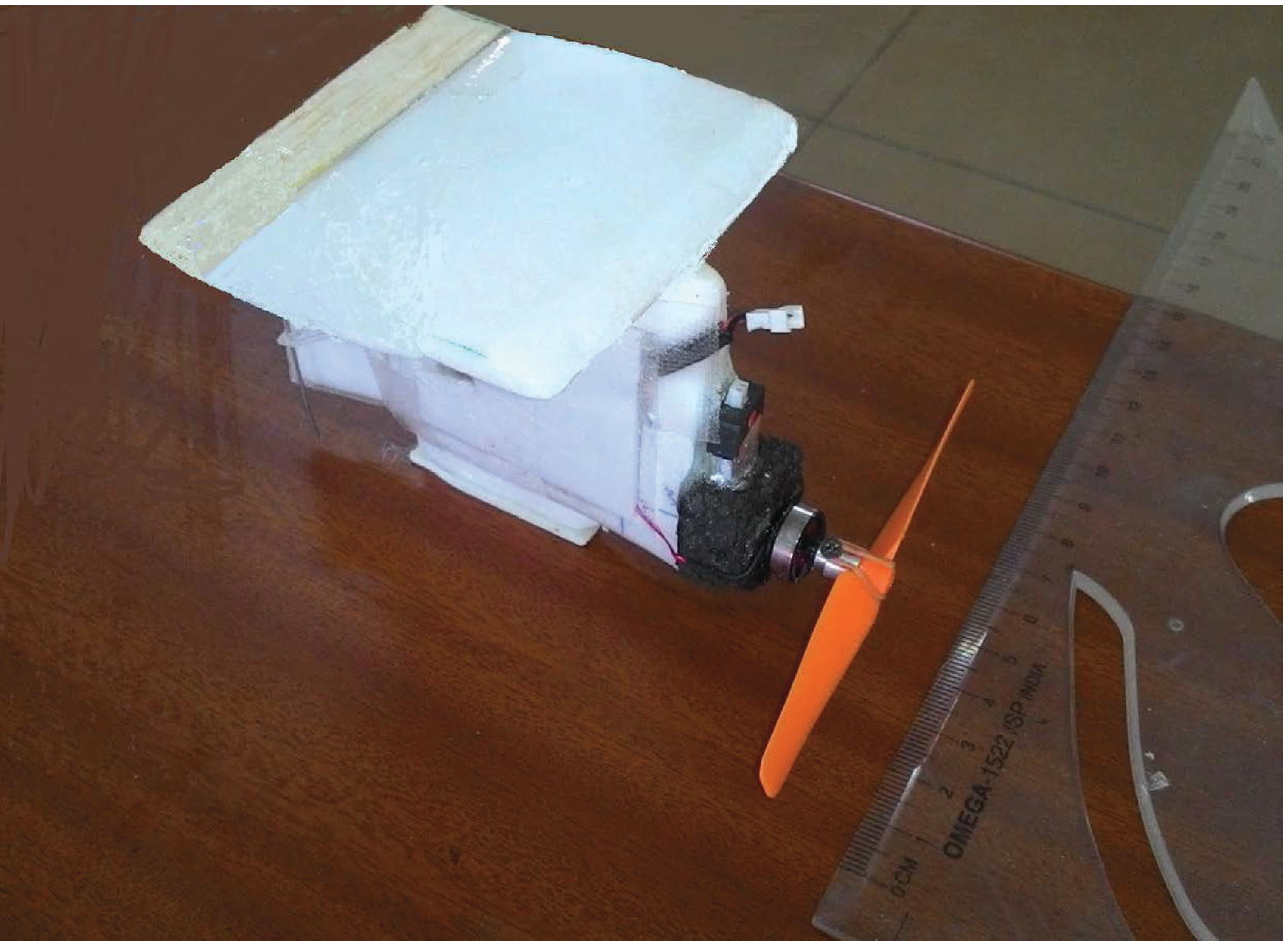}}
	\subfigure[Autopilot hardware  \label{fig:autopilot}]{\includegraphics[width=1.4in, height=0.9in]{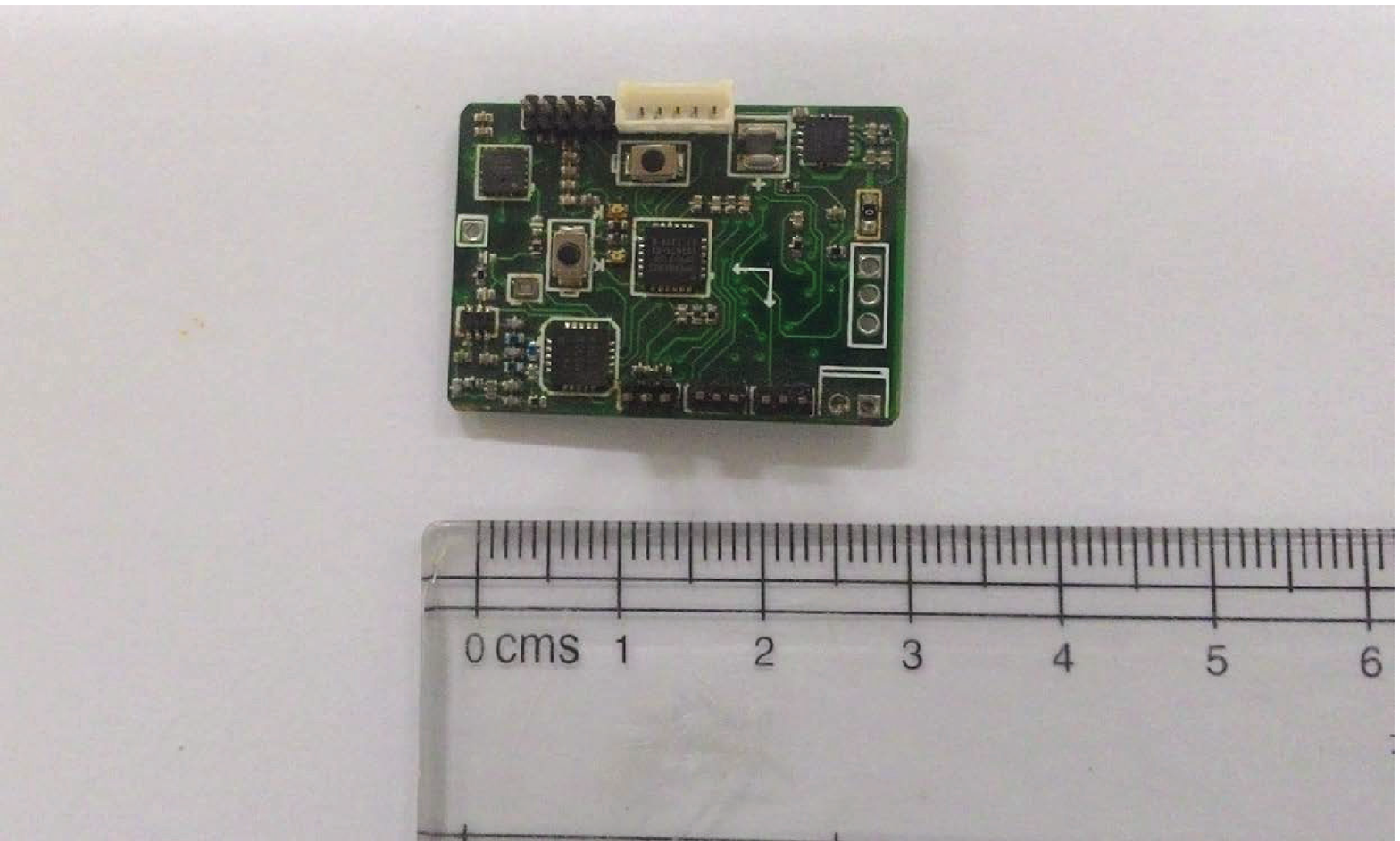}}
	\caption{75~mm wingspan NAV and  autopilot hardware}
\end{figure}
Extended Kalman Filter (EKF) is the de-facto standard for UAV estimation schemes. However, EKF is not suitable for a NAV, as its autopilot hardware has limited computational and memory resources, specifically for Jacobian computing. A  gain-scheduled EKF reduces the computational cost of calculating the estimator gain \cite{pham,Hork}, but it still requires the measurement of scheduling variables like airspeed etc.  Further, the  significant model uncertainties in NAVs can induce notable errors in the state  estimates. These difficulties clearly point out to the need for an estimation algorithm that is computationally simpler and robust to model uncertainties. This algorithm should also cater to both stable/unstable plant models and  should not require a computationally expensive gain-scheduling approach. To put the problem in a sharp focus, a NAV requires a single computationally less intensive and robust estimator to estimate the states of a finite set of MIMO LTI uncertain unstable plants. Herein,  such an estimator is  referred to  as a Robust Simultaneous (RS) estimator. The overall problem definition of simultaneous estimation is given below.

\textit{Simultaneous Estimation problem}: Let us consider a finite set,  $\mathcal{P}=\{\mathbf{P}_i(s) \in \mathcal{RL_\infty}~|~ i\in\{1,\dots, l, \dots, N\}$, containing stabilizable and detectable MIMO LTI plant models in a transfer function matrix form (i.e.,    $\mathbf{P}_i(s) \in \mathcal{RL_\infty}$). Here,  $\mathcal{RL_\infty}$ symbolizes the space of proper, real-rational functions of $s\in \mathcal{C}$ which has no  poles in the imaginary axis of $s$-plane. The state-space form of $\mathbf{P}_i(s)$ is given by
\begin{equation}
\begin{aligned}
\mathbf{P}_i(s): \left \{ 
\begin{array}{ll}
\dot{\mathbf{x}}_i(t) & = A_{i}\mathbf{x}_i(t)+B_i\mathbf{u}_{i}(t)  \\
\mathbf{y}_i(t) & = C_i\mathbf{x}_i(t)+\mathbf{v}(t)     \\
\mathbf{z}_i(t) & =C_z\mathbf{x}_i(t)
\end{array}
\right .
\label{eqpa1}
\end{aligned}
\end{equation}
where $\mathbf{x}_i(t) \in \mathbb{R}^{\hat{n}}$,  $\mathbf{u}_i(t) \in \mathbb{R}^{\hat{m}}$, $\mathbf{y}_i(t) \in \mathbb{R}^{\hat{r}}$, and $\mathbf{z}_i(t) \in \mathbb{R}^{\hat{q} \leq\hat{n}}$
represent the state vector, the control input vector, the measurement  vector, and the   vector   that contains those states that need to be estimated,  respectively.
Besides this, $\mathbf{v}(t) \in \mathbb{R}^{\hat{r}}$ is a bounded energy  measurement noise vector. Note that the noise  considered in this article is zero-mean Gaussian white. Further, $A_i \in \mathbb{R}^{\hat{n} \times \hat{n}}$, $B_i \in \mathbb{R}^{\hat{n} \times \hat{m}} $, and $C_i \in \mathbb{R}^{\hat{r} \times \hat{n}}$    are the system, input, and output matrices, respectively. Also,  $C_z \in \mathbb{R}^{\hat{q} \times \hat{n}}$ is a constant matrix for all the plants. Let $\hat{\mathbf{P}}_l(s) \in \mathcal{RH}_\infty$ be a suitable estimator for all the plant models belonging to $\mathcal{P}$. Here, $\mathcal{RH_\infty}$ denotes the space of proper, real-rational functions of $s\in \mathcal{C}$ which are analytic in $\mathcal{C}_+$. Let the state-space form  of $\hat{\mathbf{P}}_l(s)$ be formed using the state-space matrices of $\mathbf{P}_l(s) \in \mathcal{P}$. Following this, one of the state-space realizations of $\hat{\mathbf{P}}_l(s)$ is  given by 
\begin{eqnarray}
\hat{\mathbf{P}}_l(s): \left \{
\begin{array}{ll}
\dot{\hat{\mathbf{x}}}_l(t) = & A_l\hat{\mathbf{x}}_l(t) + B_l\mathbf{u}(t) + \mathbf{L}_l \left(\mathbf{y}(t)-C_l\hat{\mathbf{x}}_l(t)\right)\\
\hat{\mathbf{z}}_{l}(t)= & C_z\hat{\mathbf{x}}_{l}(t)
\end{array}
\right.
\label{eqest}
\end{eqnarray}
where $\hat{\mathbf{x}}_{l}(t) \in \mathbb{R}^{\hat{n}}$ is the state vector of the estimator,  $\hat{\mathbf{z}}_{l}(t) \in \mathbb{R}^{\hat{q}}$ is the output vector of the estimator which provides the estimate of $\hat{\mathbf{z}}_{i}(t)$ of $\mathbf{P}_i(s) \in \mathcal{P}$ for all $i~ \in \{1,\dots,l,\dots,N\}$, and   $\textbf{L}_l \in \mathbb{R}^{\hat{n} \times \hat{r}}$ is the estimator gain. In (\ref{eqest}), $\mathbf{u}(t)$ and $\mathbf{y}(t)$ are the inputs to the estimator  from the plant. For example, $\mathbf{u}(t)=\mathbf{u}_l(t)$ and $\mathbf{y}(t)=\mathbf{y}_l(t)$ when $\hat{\mathbf{P}}_l(s)$ becomes the estimator of $\mathbf{P}_l(s)$. Also, for all $i \in \{1,\dots,l,\dots, N\}$,  $\mathbf{e}_{\mathbf{z}_{il}}(t)=\mathbf{z}_{i}(t)-\mathbf{\hat{z}}_{l}(t) \in \mathbb{R}^{\hat{q}}$ be the estimation error vector when  $\hat{\mathbf{P}}_l(s)$  estimates $\mathbf{z}_{i}(t)$ of  $\mathbf{P}_i(s) \in \mathcal{P}$, respectively. Let $\mathbf{d}(t)$ be a bounded energy exogenous signal vector that adversely affect the estimation error dynamics of the estimator by  increasing the estimation errors (e.g. $\mathbf{v}(t)$ given in (\ref{eqpa1})). 
Now, the simultaneous estimation problem is about finding an estimator $\hat{\mathbf{P}}_l(s)$  for  $\mathcal{P}$ such that the following two conditions hold $ \forall ~i \in \{1,\dots, l, \dots,N\}$.\\
\textit{Condition I}: when $\mathbf{d}(t)=\mathbf{0}$, all the estimation error vectors should  asymptotically converge to zero, i.e.,
\begin{align}
\lim_{t \to\infty} \mathbf{e}_{\mathbf{z}_{il}}(t)&=0
 \label{cd-1}
\end{align}
\textit{Condition II}: when $\mathbf{d}(t)\neq\mathbf{0}$, the root mean square (RMS) gain from $\mathbf{d}(t)$ should be bounded, i.e.,
\begin{align}
\underset{\underset{\mathbf{d}(t)\neq\mathbf{0}}{\mathbf{d}(t) \in \mathcal{L}_2[0,\infty)}}{\sup}~\frac{\parallel \mathbf{e}_{\mathbf{z}_{il}}(t) \parallel_2}{\parallel \textbf{d}(t) \parallel_2}& < \gamma \label{cd-2} 
\end{align}
where $\gamma$ is a constant that satisfies $1>\gamma >0$ and $\sup$ symbolizes supremum. In this paper,   only \textit{Condition II} is considered as the measurements of NAV are always affected by  $\mathbf{v}(t)$.
\noindent Before formulating the problem of simultaneous estimation for a NAV, we provide first a brief review of the existing  simultaneous estimation research work in the literature below.\par 

\noindent The problem of simultaneous observation was first studied in \cite{yao}. In this paper, coprime factorization technique was utilized to solve the simultaneous observation problem. For the finite set that contains at least one stable plant, the  necessary and sufficient conditions for the existence of simultaneous observations were obtained.   Using the proposed method, a simultaneous observer for two plants  was designed.  In \cite{kov}, a stable inverse approach was employed to synthesize a simultaneous observer for a given set of plants. The restrictions on these  plants were that they should not have any right half plane zeros besides  satisfying the condition,  $(\hat{m}+\hat{r})>\hat{m}N$. For plants that have the same number of inputs and outputs without common eigenvalues, the necessary and sufficient conditions for the existence of a simultaneous functional observer were presented in \cite{jai}. In \cite{lau},  algebraic geometry tools were presented to characterize the simultaneous observability of a set of linear single-input single-output plants and also to design a simultaneous state observer for the same. The  methods presented in \cite{yao}-\cite{lau} are not suitable for  synthesizing   a simultaneous estimator for a NAV due to the following reasons:  All  the plants of the NAV may be unstable  and also can have zeros on the right half of the $s$-plane \cite{jinjgcd}. Also, the number of outputs of the NAV is   more than the number of inputs violating the condition mentioned in \cite{jai}. Furthermore, in the case of a NAV,   satisfying the condition, $(\hat{m}+\hat{r})>\hat{m}N$ mentioned in \cite{kov} is not possible. For example,  a NAV with three inputs and five outputs, a simultaneous estimator can be synthesized only for two plants. The measured outputs of the plants of a  NAV are also affected by noise. Apart from this, the plants of the NAV are subjected to higher model uncertainties. In \cite{yao,lau}, no method is explicitly proposed to provide the desired performance (to achieve the \textit{condition II})   by the estimator when the plants are subjected to measurement noise and higher model uncertainties. To overcome the above-mentioned limitations, one needs to develop a new method to synthesis a simultaneous estimator for a NAV.\par
In this paper, we propose a novel   Gap Reduced Minimum Error  Robust Simultaneous (GRMERS) estimator to handle unstable plants with model uncertainties and measurement noises. The GRMERS estimator incorporates the solution of two problems: a Minimum Error Robust Simultaneous (MERS) estimation problem and a Gap Reduced (GR) compensator problem.  The  MERS estimation problem  finds a single estimator referred to as the MERS estimator that accomplishes robust state  estimation with minimal (largest) worst-case estimation error for $N$ number of unstable uncertain MIMO linear plants of the NAV. The estimation error of the MERS estimator is further reduced by decreasing the  \textit{gap}  (see \cite{vinfre} for definition) between the \textit{graphs}  (see Definition \ref{defgra}) of the plants in $\mathcal{P}$  by cascading these plants with suitable pre/post compensators.  These compensators are called the GR compensators, and the corresponding synthesis problem is termed as the GR compensators problem. The GR compensators are defined by first-order differential equations which  can be solved by using the limited computational capabilities of the NAV's autopilot hardware. Using the robust estimation theory, the  MERS estimator  and GR compensator designs problems are devised as non-convex optimization problems following the robust estimation theory, formulated in terms of Linear Matrix Inequalities (LMI) and the properties of $v$-gap metric, respectively.
The major highlights  of the proposed GRMERS estimator  in this paper are: 
\begin{enumerate}
\item 	This approach can handle the robust simultaneous (RS) estimation of more than three ($N>$3) minimum/non-minimum phase unstable plants even having common eigenvalues.
\item	To our best knowledge, it has been  shown here (for the first time) that  cascading the plants with compensators (GR compensators) that reduce the gap between the graphs of the plants can reduce the root mean square value of estimation errors of a simultaneous estimator.
\end{enumerate} 
The effectiveness of both the MERSE and GRC algorithms are demonstrated by generation of the MERS estimator  and the GR compensators to synthesise the GRMERS estimator for four unstable plants of the NAV mentioned in \cite{jin}. The stability,  nominal, and robust performances of the GRMERS estimator are 1) validated through numerical simulations with Gaussian measurement noise and 2) compared with the performances of MERS estimator and $H_\infty$ filter. For this purpose, individual   $H_\infty$ filters are designed separately for each plant. The nominal performance analysis indicates that the best performance is given by individual $H_\infty$ filters, followed by the GRMERS estimator.  As compared to the MERS estimator, the GRMERS estimator yields up to $55\%$ reduction in estimation error, which substantiates the effectiveness of providing GR compensators. The robust performance analysis, however, shows that the GRMERS estimator has a lower estimation error of up to $43~\%$ compared to the $H_\infty$ filters.

\noindent The paper is organized as follows.   The dynamics of a fixed-wing NAV and the problem formulations are presented  in Section \ref{P_S}. In Section \ref{SRS}, the   GRMERS estimator is discussed. The design and performance evaluation of  the  GRMERS estimator  are presented in Section \ref{DPE}. Finally,  Section \ref{cons} summarizes the key results of this paper.

\section{Problem Formulations of Minimum Error Robust Simultaneous Estimator and Gap Reduced Compensators for a Fixed-wing NAV}\label{P_S}

\noindent In this section,   the dynamic model of a fixed-wing NAV along with its complexities is described  first  to motivate the need for  robust simultaneous estimation for a NAV . Next,  the precise mathematical problem formulations for the design of the Minimal Error Robust Simultaneous (MERS) estimator and the Gap Reducing (GR) compensators are presented.

\subsection{Dynamics of a Single Propeller Fixed-Wing NAV}\label{D_N}

\noindent Here, a  brief description of the dynamic model for  a single propeller fixed-wing NAV is provided, and more details can be found in \cite{jin,jinsmc}.  Generally, in a fixed-wing aircraft,  the actuator$'$s bandwidth would be much higher than that of the plant, whereas   it is not  true in the case of a NAV.  Hence, in the flight controller and estimator design of a NAV, one has to explicitly include the dynamics of the actuator along with the plant dynamics.  Besides,  the dynamics of a single propeller fixed-wing NAV  has significant cross-coupling effects. Based on these considerations, a suitable linear model for a single propeller fixed-wing NAV with both the coupling effects and actuator dynamics is the linear coupled model \cite{jin} given by
\begin{equation}
  \dot{\mathbf{x}}_i(t)={A}_{i}\mathbf{x}_i(t)+B\mathbf{u}_{i}(t) \label{adcp3}  
\end{equation}
where $\mathbf{x}_i(t) \in \mathbb{R}^{\hat{n}=11}$, $\mathbf{u}_{i}(t) \in \mathbb{R}^{\hat{m}=3}$, $A_{i} \in \mathbb{R}^{(11 \times 11)}=\left[\begin{array}{c;{2pt/2pt}r}A_{{Lo}_i}&A_{{Lo}_i}^{La}\\\hdashline[2pt/2pt]A_{{La}_i}^{Lo}&A_{{La}_i} \end{array}\right]$, and $B_{i} \in\mathbb{R}^{(11 \times 3)}=\left[\begin{array}{c;{2pt/2pt}r}B_{{Lo}_i}&B_{{Lo}_i}^{La}\\\hdashline[2pt/2pt]B_{{La}_i}^{Lo}&B_{{La}_i} \end{array}\right]$ are the state vector, the control input vector, the system matrix, and the control matrix, respectively. Here, $A_{{Lo}_i} \in \mathbb{R}^{6 \times 6}$, $A_{{La}_i} \in \mathbb{R}^{5 \times 5}$, $B_{{Lo}_i} \in \mathbb{R}^{6 \times 2}$, and $B_{{La}_i} \in \mathbb{R}^{5 \times 1}$ represent the system and control matrices of the longitudinal and lateral state-space models, respectively. Also, $A_{{Lo}_i}^{La}  \in \mathbb{R}^{6 \times 5}$, $A_{{La}_i}^{Lo}  \in \mathbb{R}^{5 \times 6}$,  $B_{{Lo}_i}^{La}  \in \mathbb{R}^{6 \times 1}$, and $B_{{La}_i}^{Lo}  \in \mathbb{R}^{5 \times 2}$ are the longitudinal coupling block of $A_i$, lateral coupling block of $A_i$, longitudinal coupling block of $B_i$, and lateral coupling block of $B_i$, respectively.  Furthermore,   $\mathbf{x}_{i}(t)$ and  $\mathbf{u}_{i}(t)$ in (\ref{adcp3}) are defined as
\begin{align}
\mathbf{x}_{i}(t) =&\left[\begin{array}{ccccccccccc}
u & w&q&\theta&\delta_{e}&\delta_{T}&v & p&r&\phi&\delta_{r}
\end{array}\right]^T(t)  \label{sts}\\
\mathbf{u}_{i}(t) =&\left[\begin{array}{ccc}
\delta_{eu}& \delta_{Tu}&\delta_{ru}
\end{array}\right]^T(t) \label{ipts}
\end{align}
where $\left[\begin{array}{ccc}
u(t) & v(t)&w(t)
\end{array}\right]^T$ is the body-fixed linearized translational velocities in m/s, $\left[\begin{array}{ccc}
p(t) & q(t)&r(t)
\end{array}\right]^T$ is the body-fixed linearized rotational velocities in rad/s, and $\left[\begin{array}{cc}
\theta(t) & \phi(t)
\end{array}\right]^T$ is the body-fixed linearized Eulers angles in rad. Also, $
\delta_{e}(t)$, $\delta_{r}(t)$, and $\delta_T(t)$ are the linearized elevator deflection (in rad), rudder deflection  (in rad), and propeller speed (in rps-revolution per second), respectively. In (\ref{ipts}),  $\delta_{eu}(t)$ (rad), $\delta_{ru}(t)$ (in rad), and $\delta_{Tu}(t)$ (in rps) represent the inputs to  the elevator actuator, the input to the rudder actuator, and the input to the electric motor  that drives the propeller, respectively.

\noindent The linear dynamics   of a fixed-wing NAV is adversely coupled, uncertain, and unstable as seen from dynamics of  the  $75$~mm wingspan fixed-wing NAV mentioned in  \cite{jin}. Hence, the NAVs similar to  the $75$~mm wingspan NAV require  flight controllers to handle all these complexities and  accomplish the desired  mission. This controller can use  either  full state feedback or output feedback strategy. Generally, the full state feedback strategy is preferred as various closed-form solutions are available when compared with the output feedback strategy. The development of  a well-proven  full state feedback flight controller   requires  the measurement of all the state variables.   If  all the state variables can not be measured,  then  estimates of unmeasured states are required. Among all the state variables of the NAV,  the  measured  state variables are $q(t)$, $\theta(t)$, $p(t)$, $r(t)$, and  $\phi(t)$ and can be directly used for control. Hence,  the  measurement vector of the NAV, $\mathbf{y}_i(t) \in \mathbb{R}^{\hat{r} =5}$,  is given by
\begin{align}
\mathbf{y}_{i}(t) =&\left[\begin{array}{cccccccc}
q&\theta& p&r&\phi
\end{array}\right]^T(t) 
\label{Meqn}
\end{align}
Following this, the measurement equation of the NAV is given by
\begin{align}
\mathbf{y}_i(t)=C\mathbf{x}_i(t)+\mathbf{v}(t)
\label{oteqn}
\end{align}
where $\mathbf{v}(t) \in \mathbb{R}^{5}$ and $C \in \mathbb{R}^{5 \times 11}$. Due to the absence of lightweight sensors for measurement, $u(t), v(t), $  $w(t)$, $\delta_{e}$, $\delta_{T}$, and $\delta_{r}$ need to be estimated.  Thus, the estimation vector for a  NAV, $\mathbf{z}_{i}(t)  \in  \mathbb{R}^{\hat{q}=6}$, is given by 
\begin{align}
\mathbf{z}_{i}(t) =&\left[\begin{array}{cccccc}
u & w&\delta_{e}&\delta_{T}&v&\delta_{r} 
\end{array}\right]^T(t)
\label{outeqn}
\end{align}
Then, the equation of $\mathbf{z}_i(t)$ is given by 
\begin{align}
\mathbf{z}_{i}(t) =&C_z\mathbf{x}_i(t)
\label{outeqn1}
\end{align}
where $C_z \in \mathbb{R}^{6 \times 11}$. Also, note that, in the case of a NAV, $\hat{n}$=$\hat{r}+\hat{q}$. Consequently, the state-space model  of the NAV used for  designing  the estimator  is given by (\ref{eqpa1}) with $B_i$=$B$,  $C_i$=$C$, $\hat{n}=11$, $\hat{m}=3$, $\hat{r}=5$,  and $\hat{q}=6$. The resource  constrained autopilot hardware  and the uncertain and unstable nature of NAV's LTI plants  suggest that there is a need for a MERS estimator for the estimation of $\mathbf{z}_{i}(t)$. The description of the MERS estimator is given in the next subsection.

\subsection{Minimum Error Robust Simultaneous Estimation Problem}

\noindent To describe the MERS estimation problem, consider $\mathcal{P} \subset	 \mathcal{RL_\infty}$ containing $N$ number of stabilizable and detectable LTI MIMO unstable adversely coupled  uncertain plants of the NAV given  in  \cite{jin}.  The  state-space form of any plant  belonging  to $\mathcal{P}$ is given by (\ref{eqpa1}) with $B_i$=$B$,  $C_i$=$C$, $\hat{n}=11$, $\hat{m}=3$, $\hat{r}=5$,  and $\hat{q}=6$.  Let  $\hat{\mathbf{P}}_l(s) \in \mathcal{RH}_{\infty}$  represents an estimator.  The estimator, $\hat{\mathbf{P}}_l(s)$ is formed using the state-space matrices of $\mathbf{P}_l(s) \in \mathcal{P}$ and a suitable estimator gain, $\textbf{L}_l \in \mathbb{R}^{\hat{n}=11 \times \hat{r}=5}$. Following this, 
the state-space model of $\hat{\mathbf{P}}_l(s)$ is  given by (\ref{eqest})
with $\mathbf{\hat{x}}_{l} \in \mathbb{R}^{\hat{n}=11}$  and $\mathbf{\hat{z}}_{l}(t) \in \mathbb{R}^{\hat{q}=6}$. Here, $\mathbf{\hat{z}}_{l}(t)$  provides the estimate of $\mathbf{z}_{i}(t)~ \forall~ i~ \in \{1,\dots,l,\dots,N\}$. 
When we consider $\mathbf{\hat{P}}_l(s)$ as the common estimator of  $\mathbf{P}_i(s) \in \mathcal{P}~ \forall~i \in \{1,\dots, l, \dots, N\}$, then the state estimation error vectors  and the estimation error vectors are denoted by $\mathbf{e}_{\mathbf{x}_{il}}(t)=\mathbf{x}_i(t)-\mathbf{\hat{x}}_{l}(t)$ and  $\mathbf{e}_{\mathbf{z}_{il}}(t)=\mathbf{z}_i(t)-\mathbf{\hat{z}}_{l}(t)$ for all $i~\in \{1,\dots, l, \dots, N\}$, respectively.  Similarly, for all $i~\in~\{1, \dots, l, \dots, N\}$,  $\mathbf{e}_{\mathbf{y}_{il}}(t)=\mathbf{y}_i(t)-C\mathbf{\hat{x}}_{l}(t)$.
\par 
Now, consider the case where $\mathbf{\hat{P}}_l(s)$ is employed  to estimate $\mathbf{z}_{i}(t)~ \forall ~ i \in \{1, \dots, l, \dots,N\}$ of $\mathbf{P}_i(s) \in \mathcal{P}~ \forall~i \in \{1, \dots, l, \dots,N\}$, respectively. Then, one can obtain $N$ number of estimator error models that are given by 
 \begin{equation}
 \begin{aligned}
 \dot{\mathbf{e}}_{\mathbf{x}_{il}}(t)={}&A_l\mathbf{e}_{\mathbf{x}_{il}}(t)+\Delta A_{il} \mathbf{x}_{i}(t)+B_e\mathbf{u}_e(t)\\
 \mathbf{e}_{\mathbf{z}_{il}}(t)={}&C_z\mathbf{e}_{\mathbf{x}_{il}}(t)\\
 \mathbf{e}_{\mathbf{y}_{il}}(t)={}&C\mathbf{e}_{\mathbf{x}_{il}}(t)+\mathbf{v}(t); i \in \{1, \dots, l, \dots, N\}
 \end{aligned}
 \label{spes_1} 
 \end{equation}
where $\Delta A_{il}$ is the difference between the system matrices, $A_l$ and $A_i$ of $\mathbf{P}_l(s)$ and $\mathbf{P}_i(s)$, respectively, $\mathbf{x}_{i}(t)$ is the state vector of $\mathbf{P}_i(s)$, $B_e=\mathbf{I} \in \mathbb{R}^{11 \times 11}$ is the input matrix of error dynamics, and $\mathbf{u}_e(t)=-\mathbf{L}_l \mathbf{e}_{\mathbf{y}_{il}}(t)$. Applying this  $\mathbf{u}_e(t)$ in (\ref{spes_1})  results in
\begin{align}
\dot{\mathbf{e}}_{\mathbf{x}_{il}}(t)=(A_l-B_eC\mathbf{L}_l)\mathbf{e}_{\mathbf{x}_{il}}(t)+\Delta A_{il} \mathbf{x}_{i}(t) - B_e\mathbf{L}_l\mathbf{v}(t)
\label{tf2}
 \end{align}
 Equation (\ref{tf2}) suggests that, along with $\mathbf{v}(t)$,  $\mathbf{x}_{i}(t)$ also becomes  an exogenous signal vector that adversely affect the estimation error dynamics. This is because of the difference between the system matrices of $\mathbf{P}_i(s)$ and $\mathbf{\hat{P}}_l(s)$. Hence, when $\mathbf{\hat{P}}_l(s)$ estimates $\mathbf{z}_i(t)$ of $\mathbf{P}_i(s) \in \mathcal{P}$, then $\mathbf{d}(t) \triangleq \left[\begin{array}{cc}
\mathbf{x}_i(t) &\mathbf{v}(t)
\end{array}\right]^T$.

\noindent Note that $\mathbf{x}_{i}(t)$ needs to be a bounded energy  exogenous signal. For that either $\mathbf{P}_i(s)$ or its closed-loop plant  must to be stable. Further, when $\mathbf{\hat{P}}_l(s)$ becomes the common estimator of all the plants belonging to $\mathcal{P}$, then there exist $N$ number of closed-loop transfer function matrices, $\textbf{e}(s)_{\mathbf{P}_i\mathbf{\hat{P}}_l} \in  \mathcal{RH_\infty} $, from   $\mathbf{d}(s) \triangleq \left[\begin{array}{cc}
\mathbf{x}_i(s) &\mathbf{v}(s)
\end{array}\right]^T$ to $\textbf{e}_{\textbf{z}_{il}}(s)$ for all $i ~\in~\{1,\dots, l, \dots, N\}$. Now, consider $N$ number of   estimators, $\mathbf{\hat{P}}_1(s), \dots, \mathbf{\hat{P}}_i(s),\dots,\mathbf{\hat{P}}_l(s),\dots,\mathbf{\hat{P}}_N(s)$, each formed using state-space matrices of  $\mathbf{P}_1(s) \in \mathcal{P},\dots,\mathbf{P}_i(s) \in \mathcal{P}, \dots,\mathbf{P}_l(s) \in \mathcal{P}, \dots, \mathbf{P}_N(s) \in \mathcal{P}$, respectively. Let  the finite set,  $\hat{\mathcal{P}}$, contains all these $N$ estimators. The state-space forms of these estimators are given as
 \begin{equation}
 \mathbf{\hat{P}}_1(s): \begin{cases}
 \dot{\mathbf{\hat{\mathbf{x}}}}_{l}(t)&=A_l\mathbf{\hat{x}}_l(t)+B\mathbf{u}(t)+\mathbf{L}_l (\mathbf{y}(t)-C\mathbf{\hat{x}}_l(t))\\
 \mathbf{\hat{z}}_{l}(t)&=C_z\mathbf{\hat{x}}_{l}(t); \forall~ l \in \{1,\dots, i, \dots, N\}
 \end{cases}
 \label{eqest2}
 \end{equation}
where  $\mathbf{u}(t)$ $\in   \{\mathbf{u}_1(t),\dots,\mathbf{u}_i(t),\dots,\mathbf{u}_N(t)\}$  and $\mathbf{y}(t) \in \{\mathbf{y}_1(t),\dots,\mathbf{y}_i(t),\dots,\mathbf{y}_N(t)\}$. Here, $\mathbf{u}_i(t)$ and $\mathbf{y}_i(t)$ are the control input and measurement   vectors of $\mathbf{P}_i(s) \in \mathcal{P}$, respectively.
Further, when each estimator  belonging to $\hat{\mathcal{P}}$ is utilized as the common estimator for all the $N$ number of plants of $\mathcal{P}$, then  $N \times N$ closed-loop transfer function matrices, $\textbf{e}(s)_{\mathbf{P}_i\mathbf{\hat{P}}_l}$  from $\textbf{d}(s)$ to $\textbf{e}_{\textbf{z}_{il}}(s)~ \forall~ i, l \in \{1,\dots, N\}$ are obtained.  Let the $H_\infty$ norm of $\textbf{e}(s)_{\mathbf{P}_i\mathbf{\hat{P}}_l}$ that  provides the worst-case gain from  $\mathbf{d}(t)$ is defined  as
   \begin{align}
  ||\textbf{e}(s)_{\mathbf{P}_i\mathbf{\hat{P}}_l}||_\infty=\underset{\underset{\mathbf{d}(t)\neq\mathbf{0}}{\mathbf{d}(t) \in \mathcal{L}_2[0,\infty)}}{\sup}~\frac{\parallel \mathbf{e}_{\mathbf{z}_{il}}(t) \parallel_2}{\parallel \textbf{d}(t) \parallel_2}
  \label{htf}
  \end{align}
Now, if we consider $\mathbf{\hat{P}}_l(s) \in \hat{\mathcal{P}}$ as the simultaneous estimator of all the plants belonging to $\mathcal{P}$, then    $\parallel \textbf{e}(s)_{\mathbf{P}_i\mathbf{\hat{P}}_l} \parallel_\infty ~ \forall~ i \in \{1,\dots, l ,\dots,N\}$ are the worst-case gains from $\mathbf{d}(s)$ associated with $\mathbf{\hat{P}}_l(s)$.  Also, the largest worst-case gain from $\mathbf{d}(s)$ associated with $\mathbf{\hat{P}}_l(s)$ is $\max~\{\parallel \textbf{e}(s)_{\mathbf{P}_i\mathbf{\hat{P}}_l} \parallel_\infty\}~ \forall~ i \in \{1,\dots, l ,\dots,N\}$. The largest worst-case gain from $\mathbf{d}(s)$ associated with an estimator  belongs to $\hat{\mathcal{P}}$ occurs while estimating the desired state variables of a plant referred to as the worst plant  from the perspective of the simultaneous estimation process. This worst plant   is indicated through the  subscript $k$  along with a superscript  that shows its association with the corresponding estimator. Following this, the largest worst-case gain from $\mathbf{d}(s)$ associated with  $\hat{\mathbf{P}}_l(s) \in \hat{\mathcal{P}}$, $\parallel \mathbf{e}(s)_{\mathbf{P}_{k}^{l}\mathbf{\hat{P}}_l} \parallel_\infty$, is defined  as
\begin{align}
\parallel \mathbf{e}(s)_{\mathbf{P}_{k}^{l}\mathbf{\hat{P}}_l} \parallel_\infty=\max \{||\mathbf{e}(s)_{\mathbf{P}_i\mathbf{\hat{P}}_l}||_\infty ~|~i\in\{1,\dots, l ,\dots,N\}\}
\label{cond2}
\end{align}  
 From (\ref{cd-2}) and (\ref{htf}), the sufficient conditions for considering  $\mathbf{\hat{P}}_l(s) \in \hat{\mathcal{P}}$ as a simultaneous estimator of all the plants belonging to $\mathcal{P}$ with respect to \textit{Condition II} are $\mathbf{e}(s)_{\mathbf{P}_i\mathbf{\hat{P}}_l} < \gamma ~ \forall~ i \in \{1,\dots, l, \dots, N\}$. An equivalent single condition that satisfy  the above conditions can be obtained  using (\ref{cond2}), and is given by
\begin{equation}
\parallel \mathbf{e}(s)_{\mathbf{P}_{k}^{l}\mathbf{\hat{P}}_l} \parallel_\infty < \gamma
\label{cond1}
\end{equation}
Now, the MERS estimation problem can be defined precisely as:  Given $\mathcal{P}$.
 Find   $\mathbf{\hat{P}}_j(s) \in \hat{\mathcal{P}}$ along with  $\mathbf{L}_j$ such that 1) $\mathbf{\hat{P}}_j(s)$ simultaneously estimates $\mathbf{z}_i(t)$ of $\mathbf{P}_i(s) \in \mathcal{P}~ \forall~ i~\in~\{1,\dots, j, \dots, N\}$ even when these plants have model  uncertainties and 2) the condition given by
 \begin{equation}
 ||\mathbf{e}(s)_{\mathbf{P}_{k}^j\mathbf{\hat{P}}_j}||_\infty =  \min  \left\{ ||\mathbf{e}(s)_{\mathbf{P}_{k}^l\mathbf{\hat{P}}_l}||_\infty~|~l \in \{1,..., N\}\right\}  < \gamma
 \label{cond3}
 \end{equation}
  is satisfied.
  
\noindent The solution to this problem is a RS estimator referred to as the MERS estimator whose  $\parallel \mathbf{e}(s)_{\mathbf{P}_{k}^{j}\mathbf{\hat{P}}_j} \parallel_\infty$  is the smallest among $\big\{ ||\mathbf{e}(s)_{\mathbf{P}_{k}^l\mathbf{\hat{P}}_l}||_\infty~|~l \in \{1,\dots, j, \dots, N\}\big\}$ with $\parallel \mathbf{e}(s)_{\mathbf{P}_{k}^{l}\mathbf{\hat{P}}_l} \parallel_\infty < \gamma$. This suggests that suggest that the
largest worst-case estimation error of MERS estimator is the
minimal among the largest worst-case estimation errors of 
simultaneous estimators.
In this paper, we  consider only parametric  uncertainties in the form of bounded perturbations in the system matrices.
The performance of the MERS estimator can be further improved by appending suitable compensators to the plant dynamics. This is discussed in the next section.

\subsection{Gap Reducing Compensator Problem}
\noindent The Gap Reducing (GR) compensator problem is about  finding those compensators that modify the input-output characteristics of  all the plants belonging to $\mathcal{P}$ for reducing  further the estimation errors arising from the differences between the system matrices of the MERS estimator  and $\mathbf{P}_i(s)$  $\forall~i \in \{1, \dots, j, \dots, N\}$ ($\Delta A_{ij}$). The GR compensator problem can be stated as: assume that there exists a MERS estimator, $\mathbf{\hat{P}}_j(s) \in \mathcal{\hat{P}}$, for the plants belonging to $\mathcal{P}$. Now, consider  the plant, $\mathbf{P}_j(s) \in \mathcal{P}$ and assume $\mathcal{P}$ as an uncertainty model set with $\mathbf{P}_j(s)$ as the nominal plant. Besides this, assume also that the plants belonging to $\mathcal{P} \setminus {\mathbf{P}_j(s)}$ as the perturbed plants of ${\mathbf{P}_j(s)}$. Let  $\mathbf{N}_j(s) \in \mathcal{RH_\infty}$ and $\mathbf{M}_j(s) \in \mathcal{RH_\infty}$ with det($\mathbf{M}_j(s) \neq 0$) are the normalized right coprime factors of $\mathbf{P}_j(s)$. Subsequently, $\mathbf{P}_j(s)$ is given by
\begin{align}
\mathbf{P}_j(s)=&\mathbf{N}_j(s)\mathbf{M^{-1}}_j(s)
\label{copr1}
\end{align}
Let $\mathbf{\Delta}_{N_{\mathbf{P}_j\mathbf{P}_i}}(s) \in \mathcal{RH_\infty}$ and  $\mathbf{\Delta}_{M_{\mathbf{P}_j\mathbf{P}_i}}(s) \in \mathcal{RH_\infty}$ are the  right coprime factor perturbations of $\mathbf{N}_j(s)$ and $\mathbf{M}_j(s)$ with $\big|\big|[\begin{array}{cc}\mathbf{\Delta}_{N_{\mathbf{P}_j\mathbf{P}_i}}(s)&  \mathbf{\Delta}_{M_{\mathbf{P}_j\mathbf{P}_i}}(s)\end{array}]^{T}\big|\big|_\infty\leq\epsilon_{{}_{\mathbf{P}_j\mathbf{P}_i}}$, respectively. Here,   $\epsilon_{{}_{\mathbf{P}_j\mathbf{P}_i}}$  is  the least upper bound on the  right coprime factor perturbations. Then, $\mathbf{P}_i(s)~ \forall~ i \in \{1,\dots,N\} \setminus {j}$ are defined as
\begin{align}
\mathbf{P}_i(s)=&\big(\mathbf{N}_j(s)+\mathbf{\Delta}_{N_{\mathbf{P}_j\mathbf{P}_i}}(s)\big)\big(\mathbf{M}_j(s)+ \mathbf{\Delta}_{M_{\mathbf{P}_j\mathbf{P}_i}}(s)\big)^{-1}\\
& \forall~ i \in \{1,\dots,N\} \setminus {j} \nonumber
\label{coprb}
\end{align}
Now, consider the scenario where   $\mathbf{v}(t)=0$ and $\mathbf{\hat{P}}_j(s)$ is employed to estimate $\mathbf{z}_l(t)$ of $\mathbf{P}_j(s)$. To realize this scenario, the state vector of the plant  needs to be bounded. For that, a feedback controller is employed as shown in Fig. \ref{sest1}. Now, assume $\mathbf{P}_j(s)$ is perturbed to form $\mathbf{P}_i(s) \in \mathcal{P} \setminus \{\mathbf{P}_j(s)\}$ when $\mathbf{\hat{P}}_j(s)$ estimates $\mathbf{z}_l(t)$ of $\mathbf{P}_j(s)$. In that case, $\mathbf{\hat{P}}_j(s)$ receives $\mathbf{u}_i(t)=   \mathbf{u}_j(t)+\Delta  \mathbf{u}_i(t)$ and $\mathbf{y}_i(t)=   \mathbf{y}_j(t)+\Delta  \mathbf{y}_i(t)$ instead of $\mathbf{u}_j(t)$ and $\mathbf{y}_j(t)$, respectively as shown in Fig. \ref{sest1}. Here, $\Delta  \mathbf{u}_i(t)$ and $\Delta  \mathbf{y}_i(t)$ are the perturbations in the inputs and outputs of $\mathbf{P}_j(s)$, respectively arising from $\Delta A_{ij}$. So when we develop the estimation error dynamics offline, the plant  considered is $\mathbf{P}_j(s)$. But in reality, the inputs to $\mathbf{\hat{P}}_j(s)$ will be $\mathbf{u}_i(s)$ and $\mathbf{y}_i(s)$ of the perturbed plant of $\mathbf{P}_j(s)$. Following this, the estimation error dynamics is given by 
\begin{equation}
\begin{aligned}
\dot{\mathbf{e}}_{x_{jj}}(t)=&(A_j - B_e \mathbf{L}_j C) \mathbf{e}_{x_{jj}}(t) -B \Delta  \mathbf{u}_i(t)  -  \mathbf{L}_j \Delta  \mathbf{y}_i(t) \\
=&(A_j - B_e \mathbf{L}_j C) \mathbf{e}_{x_{jj}}(t) -B (\mathbf{u}_i(t)-\mathbf{u}_j(t))\\ &  -  \mathbf{L}_j (\mathbf{y}_i(t)-\mathbf{y}_j(t)) \\
\mathbf{e}_{z_{jj}}(t)=&C_z\mathbf{e}_{x_{jj}}(t)
\end{aligned}
\label{ed4}
\end{equation}
Let the eigenvalues of $(A_j - B_e \mathbf{L}_j C)$ belong to $\mathbb{C}_-$. Then,  (\ref{ed4}) suggests that $\mathbf{e}_{z_{jj}}(t)$ converges to zero when $\mathbf{u}_i(t)$=$\mathbf{u}_j(t)$ and $\mathbf{y}_i(t)$=$\mathbf{y}_j(t)$. Following this, $\mathbf{e}_{z_{jj}}(t)$will be closer to zero if we make $\mathbf{u}_i(t)$ and $\mathbf{y}_i(t)$ closer to $\mathbf{u}_j(t)$ and  $\mathbf{y}_j(t)$, respectively. This indicates that when $(\mathbf{u}_i(t) , \mathbf{y}_i(t))~ \forall~ i \in \{1,\dots,N\} \setminus \{j\}$ become closer to $(\mathbf{u}_j(t), \mathbf{y}_j(t))$, then the estimation error due to $\Delta A_{ij} \mathbf{x}_i(t) ~\forall~i \in \{1,\dots,N\}\setminus \{j\}$ becomes close to zero. 

\noindent To state the GR compensator problem,  consider the following definition.
\newtheorem{defn}{Definition}[section]
\begin{defn}
	\textit{Graph, $\mathcal{G}$, of an operator:}~~{\normalfont Let $\mathbf{P}_j(s) : \mathcal{H} \rightarrow \mathcal{H} $ be any linear operator in Hilbert space ($\mathcal{H}$) defined on the domain, $\mathcal{D}(\mathbf{P}_j(s)) \subseteq \mathcal{H}$. Then, the \textit{graph} of $\mathbf{P}_j(s)$, $\mathcal{G}(\mathbf{P}_j)$ is defined as
		\begin{align}
		\begin{split}
		\mathcal{G}(\mathbf{P}_j) & =\{(\mathbf{u}_j(s) , \mathbf{y}_j(s)) \in \mathcal{H} \times \mathcal{H} : \mathbf{u}_j(s) \in \mathcal{D}(\mathbf{P}_j(s)), \mathbf{y}_j(s) = \\ & \qquad \mathbf{P}_j(s)\mathbf{u}_j(s) \in \mathcal{H}\}
		\end{split}
		\end{align}
		
	}
\label{defgra}
\end{defn}
\noindent This definition suggests that  $	\mathcal{G}(\mathbf{P}_j)$  is the set of all pairs of $(\mathbf{u}_j(s) , \mathbf{y}_j(s))$ with $\mathbf{u}_j(s) \in \mathcal{D}(\mathbf{P}_j(s))$. Also,  this definition indicates  that making $\mathcal{G}(\mathbf{P}_i)~ \forall ~i \in \{1,\dots,N\}\setminus \{j\}$ closer to $\mathcal{G}(\mathbf{P}_j)$ increases the closeness between $(\mathbf{u}_i(t) , \mathbf{y}_i(t))~ \forall~ i \in \{1,\dots,N\} \setminus \{j\}$ and $(\mathbf{u}_j(t), \mathbf{y}_j(t))$.  Cascading the plants with pre and post compensators, $\mathbf{W_{in}}(s) \in \mathcal{RH_\infty}$  and $\mathbf{W_{ot}}(s) \in \mathcal{RH_\infty}$, respectively, we modify the input-output characteristics and thereby the \textit{graphs} of the plants. Now, the GR compensator problem can be stated as:  Find $\mathbf{W_{in}}(s) \in \mathcal{RH_\infty}$  and $\mathbf{W_{ot}}(s) \in \mathcal{RH_\infty}$ such that
\begin{enumerate}
	\item 
 $\max\{gap(\mathcal{G}(\mathbf{\acute{P}}_j(s)), \mathcal{G}(\mathbf{\acute{P}}_i(s))~|~ i \in {1,\dots, j, \dots, N}\}$ is lower than  $\max\{gap(\mathcal{G}(\mathbf{P}_j(s)), \mathcal{G}(\mathbf{P}_i(s))~|~ i \in {1,\dots, j, \dots, N}\}$.
 \item $\max\{gap(\mathcal{G}(\mathbf{\acute{P}}_j(s)), \mathcal{G}(\mathbf{\acute{P}}_i(s))~|~ i \in {1,\dots, j, \dots, N}\}$ is closer to zero.
\end{enumerate}
where $\mathbf{\acute{P}}_j(s)=\mathbf{W_{ot}}(s)\mathbf{P}_j(s)\mathbf{W_{in}}(s)$ and $\mathbf{\acute{P}}_i(s)=\mathbf{W_{ot}}(s)\mathbf{P}_i(s)\mathbf{W_{in}}(s)$.
\begin{figure}[h!]
	\centering
	\includegraphics[width=3.1in, height=2.3in]{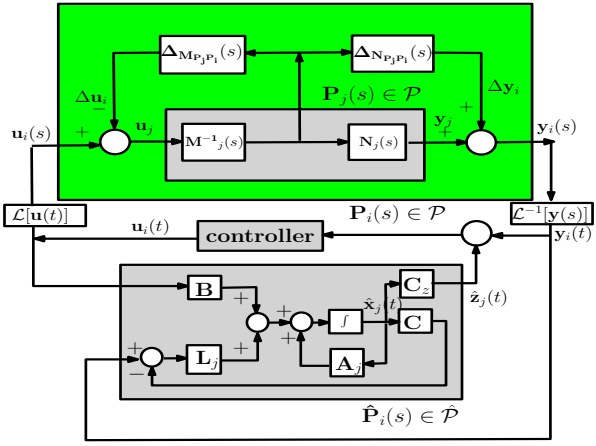}
	\caption{Simultaneous estimation process with $\mathbf{\hat{P}}_j(s)$ }
	\label{sest1}
\end{figure}

In the next section, we present the synthesis of GRMERS estimator for a typical unstable and highly coupled plants of a NAV.

\section{Synthesis of Gap Reduced Minimum Error Robust Simultaneous Estimator}\label{SRS}
\noindent The  Gap Reduced Minimum Error Robust Simultaneous (GRMERS) estimator  comprises of a MERS estimator and the GR compensators. Here,  the MERS estimator performs robust simultaneous estimation with minimal largest worst-case error as it satisfies \eqref{cond3}.  
The  GR compensators  reduces the estimation errors of MERS estimator by minimizing the \textit{gap} between the respective \textit{graphs} of the $N$ linear plants of the NAV. This minimization makes the inputs, $\mathbf{u}_i(t)$ and $\mathbf{y}_i(t)$ (ref. Fig. (\ref{sest1})), to the estimator  from the $N$ plants similar to $\mathbf{u}_j(t)$ and $\mathbf{y}_j(t)$, respectively. Hence, the GR compensators reduces the estimation errors arising from the differences in the system matrices of plants and estimator  as indicated by (\ref{ed4}). In this section,  we first explain the procedure for synthesising  the MERS estimator  model. Before  carrying out this procedure, as preliminaries, the effect of $\mathbf{v}(t)$ and $\mathbf{x}_i(t)$  on the estimation error is analyzed first.

\subsection{Preliminaries: Analysis of the Effect of Measurement  Noise and  State Vector on Estimation Error Dynamics}

\noindent Here, the effects of $\mathbf{v}(t)$ and $\mathbf{x}_i(t)$ on the estimation error  are briefly analyzed. For that,  consider   $\mathbf{\hat{P}}_l(s) \in \hat{\mathcal{P}}$ as the simultaneous estimator for estimating $\mathbf{z}_i(t)$ of $\mathbf{P}_i(s) \in \mathcal{P}$ for all $i \in \{1, \dots, l, \dots, N\}$. We now define $\mathbf{e}(s)_{\mathbf{P}_i\mathbf{\hat{P}}_l} \forall~i \in \{1,\dots, l, \dots, N\}$ using (\ref{spes_1}) and $\mathbf{u}_e(t)=-\mathbf{L}_l \mathbf{e}_{\mathbf{y}_{il}}(t)$ as
\begin{align}
\begin{split}
\mathbf{e}(s)_{\mathbf{P}_i \mathbf{\hat{P}}_l} & =    - C_z (s\mathbf{I}-(A_l-B_e \mathbf{L}_l C ))^{-1} \Delta A_{il}  \\ & \quad -C_z (s\mathbf{I}-(A_l-B_e \mathbf{L}_l C))^{-1}B_e \mathbf{L}_l ; \\ & \quad \qquad \forall ~  i \in \{1,\dots, l, \dots, N\}
\end{split}
\label{etf3}
\end{align}
In (\ref{etf3}), $- C_z (s\mathbf{I}-(A_l-B_e \mathbf{L}_l C ))^{-1} \Delta A_{il} $ is the transfer function matrix from $\mathbf{x}_i(s)$  to $\mathbf{e}_{z_{il}}(s)$ and $-C_z (s\mathbf{I}-(A_l-B_e \mathbf{L}_l C))^{-1}B_e \mathbf{L}_l $ is the transfer function matrix from $\mathbf{v}(s)$  to $\mathbf{e}_{z_{il}}(s)$. Now, when $\mathbf{\hat{P}}_l(s)$ estimates $\mathbf{z}_l(t)$ of $\mathbf{P}_l(s)$, then $\Delta A_{ll}$ is a null matrix and $\mathbf{x}_i(t)$=0. Following this, if $\mathbf{d}(t)=\mathbf{v}(t)=$0 and all the eigenvalues of $[A_l-B_e \mathbf{L}_l C]$ belong to $\mathbb{C}_-$, then $\lim_{t \to\infty} \mathbf{e}_{\mathbf{z}_{ll}}(t)=0$. This indicates that when any estimator, say $\hat{\mathbf{P}}_i(s) \in \hat{\mathcal{P}}$, estimates $\mathbf{z}_i(t)$ of $\mathbf{P}_i(s) \in \mathcal{P}$, then $\lim_{t \to\infty} \mathbf{e}_{\mathbf{z}_{ii}}(t)=0 $  if $\mathbf{v}(t)$=0 and $\hat{\mathbf{P}}_i(s) \in \mathcal{RH_\infty}$. However, when $\mathbf{\hat{P}}_l(s)$ estimates $\mathbf{z}_l(t)$ of $\mathbf{P}_l(s)$ with $\mathbf{v}\neq0$ and $\mathbf{x}_i(t) = 0$, then $\mathbf{e}_{\mathbf{z}_{ll}}(s) = -C_z (s\mathbf{I}-(A_l-B_e \mathbf{L}_l C))^{-1}B_e \mathbf{L}_l \mathbf{v}(s)$. Subsequently, in time domain, $\mathbf{e}_{\mathbf{z}_{ll}}(t) = \int_0^t -C_z e^{((A_l-B_e \mathbf{L}_l C)(t-\tau))}B_e \mathbf{L}_l \mathbf{v}(\tau)d\tau$. This integral will never be  zero when $\mathbf{v}(t)\neq$0 indicating that when $\mathbf{v}(t)\neq$0,  the estimation errors do not converge to zero during the simultaneous estimation process.  The  same phenomenon will be there even  when $\mathbf{x}_i(t) \neq$ 0.
Equation (\ref{etf3}) indicates that  $-C_z (s\mathbf{I}-(A_l-B_e \mathbf{L}_l C))^{-1}B_e \mathbf{L}_l$ are  same for all $ i \in \{1,\dots, l, \dots, N\}$. This indicates that   the effect of $\mathbf{v}(t)$ on the estimation errors remains identical when  an estimator  belonging to $\hat{\mathcal{P}}$ performs simultaneous estimation. However, (\ref{etf3}) shows that $- C_z (s\mathbf{I}-(A_l-B_e \mathbf{L}_l C ))^{-1} \Delta A_{il}~ \forall ~i \in \{1, \dots, l, \dots, N\}$ become different when $\Delta A_{il} ~\forall ~i \in \{1, \dots, l, \dots, N\}$ are distinct. This proposes that   the effect of $\mathbf{x}_i(t)$ on the estimation errors may be dissimilar when  an estimator  belongs to $\hat{\mathcal{P}}$ executes simultaneous estimation. Applying $H_\infty$ norm on both sides of (\ref{etf3}) and then using triangle  and Cauchy-Schwarz inequalities, the  $|| \mathbf{e}(s)_{\mathbf{P}_i \mathbf{\hat{P}}_l}||_\infty~\forall~i \in \{1, \dots, l, \dots, N\}$ can be written as
	\begin{align}
	\begin{split}
	|| \mathbf{e}(s)_{\mathbf{P}_i \mathbf{\hat{P}}_l}||_\infty & \leq    ||- C_z (s\mathbf{I}-(A_l-B_e \mathbf{L}_l C ))^{-1}||_\infty  ||\Delta A_{il}||_\infty  \\ & \quad+ ||-C_z (s\mathbf{I}-(A_l-B_e \mathbf{L}_l C))^{-1}B_e \mathbf{L}_l  ||_\infty;   \\ & \qquad \text{$\forall~ i \in \{1,\dots, l, \dots, N\}$}
	\end{split}
	\label{cd2}
	\end{align}
Equation (\ref{cd2}) indicates that the estimation error due to $\mathbf{x}_i(t)$   increases with the increase of  $||\Delta A_{il}||_\infty$. Next,   the development of a sufficient condition for the existence of a RS estimator based on above arguments is presented.

\subsection{Minimum Error Robust Simultaneous Estimator}
In this  subsection, we describe the development of the MERS estimator model and the MERSE algorithm. At first, the sufficient  condition for the existence of a robust simultaneous  estimator is derived.

\subsubsection{Sufficient Condition for the Existence of a Robust Simultaneous Estimator}
We now state the sufficient condition for the existence of a RS estimator  through the following theorem.
\newtheorem{theorem}{Theorem}[section]
\begin{theorem}
Given  $\mathcal{P}$,  $\hat{\mathcal{P}}$, and  $\gamma$. Consider $\mathbf{\hat{P}}_l(s) \in \hat{\mathcal{P}}$   as the simultaneous estimator   of $\mathbf{P}_i(s) \in \mathcal{P}~ \forall~i \in \{1, \dots, k, \dots,  l, \dots, N\}$. Let  $\mathbf{P}_{k}(s) \in \mathcal{P}$    satisfies the condition given by
\begin{align}
|| \Delta A_{kl} ||_\infty=\max \{|| \Delta A_{il} ||_\infty\}~\forall~i \in \{1, \dots, k, \dots, l, \dots, N\}
\label{cd4} 
\end{align}
Then, $\mathbf{P}^l_{k}(s)=\mathbf{P}_{k}(s)$ and the sufficient condition for the existence of  $\mathbf{\hat{P}}_l(s)$ as the  RS estimator   of all the plants of $\mathcal{P}$ is given 
by
\begin{align}
|| \mathbf{e}(s)_{\mathbf{P}^l_{k} \mathbf{\hat{P}}_l}||_\infty < \gamma
\label{SC1} 
\end{align}
\label{thm}
\end{theorem}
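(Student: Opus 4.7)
The plan is to leverage the norm inequality (\ref{cd2}), which the paper has already derived by taking $\|\cdot\|_\infty$ of both sides of (\ref{etf3}) and applying the triangle and submultiplicativity inequalities. That bound cleanly partitions the worst-case closed-loop gain into two pieces: a plant-independent noise term $\beta_l \triangleq \|C_z(sI-(A_l-B_e\mathbf{L}_l C))^{-1}B_e\mathbf{L}_l\|_\infty$, and a parametric-uncertainty term $\alpha_l\,\|\Delta A_{il}\|_\infty$ with $\alpha_l \triangleq \|C_z(sI-(A_l-B_e\mathbf{L}_l C))^{-1}\|_\infty$, that depends on the plant index $i$ only through the scalar factor $\|\Delta A_{il}\|_\infty$. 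This affine-in-$\|\Delta A_{il}\|_\infty$ separation is the entire engine of the proof.

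First I would fix $\mathbf{\hat P}_l(s)\in\hat{\mathcal P}$ and rewrite (\ref{cd2}), for every $i\in\{1,\dots,l,\dots,N\}$, as
\begin{equation*}
\|\mathbf{e}(s)_{\mathbf{P}_i\mathbf{\hat P}_l}\|_\infty \;\le\; \alpha_l\,\|\Delta A_{il}\|_\infty + \beta_l,
\end{equation*}
where $\alpha_l,\beta_l\ge 0$ are independent of $i$ and are finite provided the eigenvalues of $A_l-B_e\mathbf{L}_l C$ lie in $\mathbb{C}_-$ (so that $\mathbf{\hat P}_l(s)\in\mathcal{RH}_\infty$). Since by hypothesis (\ref{cd4}) the index $k$ maximises $\|\Delta A_{il}\|_\infty$ over $i$, this affine upper bound is itself maximised at $i=k$. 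Reading this through the definition (\ref{cond2}) of $\mathbf{P}^l_k(s)$ as the plant whose associated worst-case gain is the largest, the identification $\mathbf{P}^l_k(s)=\mathbf{P}_k(s)$ follows via the conservative upper bound.

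With $\mathbf{P}^l_k(s)=\mathbf{P}_k(s)$ in hand, the second half of the theorem is essentially a bookkeeping step. Assume (\ref{SC1}) holds. For every $i$ one has $\alpha_l\|\Delta A_{il}\|_\infty+\beta_l \le \alpha_l\|\Delta A_{kl}\|_\infty+\beta_l$; combining with (\ref{cd2}), each $\|\mathbf{e}(s)_{\mathbf{P}_i\mathbf{\hat P}_l}\|_\infty$ is no larger than the same constant that dominates $\|\mathbf{e}(s)_{\mathbf{P}^l_k\mathbf{\hat P}_l}\|_\infty$, and hence is strictly less than $\gamma$. This is precisely (\ref{cd-2}) for every $i$, equivalently (\ref{cond1}), which is exactly the working definition of $\mathbf{\hat P}_l(s)$ being a robust simultaneous estimator for $\mathcal{P}$ in the sense of \emph{Condition II}.

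The main obstacle is less technical than philosophical: the identification $\mathbf{P}^l_k=\mathbf{P}_k$ is extracted from the upper bound (\ref{cd2}) rather than from $\|\mathbf{e}(s)_{\mathbf{P}_i\mathbf{\hat P}_l}\|_\infty$ itself, so the argument is intrinsically conservative --- a plant with a smaller $\|\Delta A_{il}\|_\infty$ could in principle exhibit a larger actual worst-case gain because (\ref{cd2}) is only one-sided. Consequently the theorem can only deliver a \emph{sufficient} condition, not a necessary one, which is precisely how the statement is phrased. A minor secondary issue is ensuring that $\beta_l$ and $\alpha_l$ are well-defined, i.e.\ that $A_l-B_e\mathbf{L}_l C$ is Hurwitz, an assumption implicit in treating $\mathbf{\hat P}_l(s)$ as a member of $\mathcal{RH}_\infty$ throughout the problem formulation.
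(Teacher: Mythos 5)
Your proposal follows essentially the same route as the paper's own proof: both arguments run entirely off the affine upper bound (\ref{cd2}), namely $\|\mathbf{e}(s)_{\mathbf{P}_i\hat{\mathbf{P}}_l}\|_\infty \le \alpha_l\|\Delta A_{il}\|_\infty + \beta_l$ with $\alpha_l,\beta_l$ independent of $i$, invoke the maximality hypothesis (\ref{cd4}) to single out $i=k$, identify $\mathbf{P}^l_k(s)=\mathbf{P}_k(s)$, and then read off the sufficiency of (\ref{SC1}). The paper additionally treats the \emph{robust} part explicitly, by introducing perturbed plants $\bar{\mathbf{P}}_i(s)$ with $\|\Delta A_i\|_\infty < \|\Delta A_{kl}\|_\infty$ and noting that the same bound covers them; you omit this, but it is just the same one line of reasoning repeated.

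The one place your chain does not close as written is the final inference. From $\|\mathbf{e}(s)_{\mathbf{P}_i\hat{\mathbf{P}}_l}\|_\infty \le \alpha_l\|\Delta A_{kl}\|_\infty+\beta_l$, $\|\mathbf{e}(s)_{\mathbf{P}_k\hat{\mathbf{P}}_l}\|_\infty \le \alpha_l\|\Delta A_{kl}\|_\infty+\beta_l$, and $\|\mathbf{e}(s)_{\mathbf{P}_k\hat{\mathbf{P}}_l}\|_\infty<\gamma$, you cannot conclude $\|\mathbf{e}(s)_{\mathbf{P}_i\hat{\mathbf{P}}_l}\|_\infty<\gamma$: sharing an upper bound with a quantity that happens to be below $\gamma$ proves nothing unless the shared bound itself is below $\gamma$, or unless the \emph{actual} norms are ordered with $k$ largest. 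You correctly flagged that the identification of $\mathbf{P}^l_k$ is conservative because (\ref{cd2}) is one-sided, but that same conservatism is exactly what breaks this last step. In fairness, the paper's proof has the mirror-image weakness: it obtains the ordering (\ref{cd9}) of the actual $H_\infty$ norms by subtracting inequality (\ref{cd6}) from (\ref{cd5}), which is not a valid manipulation (two same-direction inequalities may not be subtracted termwise), so (\ref{cd7}) and hence (\ref{cd9}) do not follow from (\ref{cd2}) alone. A version of the theorem that both your argument and the paper's would rigorously support replaces (\ref{SC1}) by the requirement that the common upper bound $\alpha_l\|\Delta A_{kl}\|_\infty+\beta_l$ be less than $\gamma$, i.e.\ the condition must be imposed on the bound rather than on the achieved norm $\|\mathbf{e}(s)_{\mathbf{P}^l_k\hat{\mathbf{P}}_l}\|_\infty$.
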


\begin{proof}
Using  (\ref{cd2}), let  	$|| \mathbf{e}(s)_{\mathbf{P}_{i} \mathbf{\hat{P}}_l}||_\infty~\forall~i \in \{1, \dots, l, \dots, N\} \setminus \{k\}$ are expressed as

\begin{small}
		\begin{equation}
	\begin{split}
	|| \mathbf{e}(s)_{\mathbf{P}_i \mathbf{\hat{P}}_l}||_\infty & \leq    ||- C_z (s\mathbf{I}-(A_l-B_e \mathbf{L}_l C ))^{-1}||_\infty  ||\Delta A_{il}||_\infty  \\ & \quad+ ||-C_z (s\mathbf{I}-(A_l-B_e \mathbf{L}_l C))^{-1}B_e \mathbf{L}_l  ||_\infty  \\ & \quad ;  \text{$\forall~ i \in \{1,\dots, l, \dots, N\} \setminus \{k\}$}
	\end{split}
	\label{cd5}
	\end{equation}
\end{small}

\noindent Likewise, 	$|| \mathbf{e}(s)_{\mathbf{P}_{k} \mathbf{\hat{P}}_l}||_\infty$ is given by

\begin{small}
	\begin{equation}
	\begin{split}
	|| \mathbf{e}(s)_{\mathbf{P}_k \mathbf{\hat{P}}_l}||_\infty & \leq    ||- C_z (s\mathbf{I}-(A_l-B_e \mathbf{L}_l C ))^{-1}||_\infty  ||\Delta A_{kl}||_\infty  \\ & \quad+ ||-C_z (s\mathbf{I}-(A_l-B_e \mathbf{L}_l C))^{-1}B_e \mathbf{L}_l  ||_\infty 
	\end{split}
	\label{cd6}
	\end{equation}
\end{small}
	
\noindent Subtracting (\ref{cd6}) from (\ref{cd5})	results in 
	
\begin{small}
	\begin{equation}
	\begin{split}
	|| \mathbf{e}(s)_{\mathbf{P}_i\mathbf{\hat{P}}_l}||_\infty-	|| \mathbf{e}(s)_{\mathbf{P}_k \mathbf{\hat{P}}_l}||_\infty & \leq    \Xi \big[ ||\Delta A_{il}||_\infty-||\Delta A_{kl}||_\infty \big] ;   \\ & \quad   \text{$\forall~ i \in \{1,\dots, l, \dots, N\} \setminus \{k\}$}
	\end{split}
	\label{cd7}
	\end{equation}
\end{small}		

\noindent where $ \Xi=||- C_z (s\mathbf{I}-(A_l-B_e \mathbf{L}_l C ))^{-1}||_\infty$. In (\ref{cd7}), RHS is negative because $\Xi > 0$ and  $\big[ ||\Delta A_{il}||_\infty-||\Delta A_{kl}||_\infty \big] ~\forall~ i \in \{1,\dots, l, \dots, N\} \setminus \{k\}$ are negative. The later terms are negative as $\mathbf{P}_k(s)$ satisfies (\ref{cd4}). Since, RHS of  (\ref{cd7}) is negative, we can rewrite (\ref{cd7}) as
	\begin{equation}
	\begin{split}
	|| \mathbf{e}(s)_{\mathbf{P}_i\mathbf{\hat{P}}_l}||_\infty-	|| \mathbf{e}(s)_{\mathbf{P}_k \mathbf{\hat{P}}_l}||_\infty & \leq -\Lambda_i  \\ & \quad   \text{$\forall~ i \in \{1,\dots, l, \dots, N\} \setminus \{k\}$}
	\end{split}
	\label{cd8}
	\end{equation}
where $\Lambda_i$ is a positive constant. Note that $|| \mathbf{e}(s)_{\mathbf{P}_i\mathbf{\hat{P}}_l}||_\infty > 0~ \forall~i \in \{1,\dots, l, \dots, N\} \setminus \{k\}$ and 	$|| \mathbf{e}(s)_{\mathbf{P}_k \mathbf{\hat{P}}_l}||_\infty > 0$. Consequently,  the condition given in (\ref{cd8}) ensures  the condition given by 
	\begin{equation}
	\begin{split}
	|| \mathbf{e}(s)_{\mathbf{P}_i\mathbf{\hat{P}}_l}||_\infty~\forall~ i \in \{1,\dots, l, \dots, N\} \setminus \{k\} < || \mathbf{e}(s)_{\mathbf{P}_k \mathbf{\hat{P}}_l}||_\infty 
	\end{split}
	\label{cd9}
	\end{equation}
\noindent Equation \eqref{cd9}	implies  $\mathbf{P}^l_k(s)=\mathbf{P}_k(s)$ and $\parallel \mathbf{e}_{\mathbf{P}^l_k\mathbf{\hat{P}}_l}(s)\parallel_\infty=\parallel \mathbf{e}_{\mathbf{P}_k\mathbf{\hat{P}}_l}(s)\parallel_\infty$ when $\mathbf{P}_k(s)$ satisfies (\ref{cd4}). Then, from (\ref{cond1}), the sufficient condition for the existence of  $\mathbf{\hat{P}}_l(s)$  as the  simultaneous estimator of  $N$ number of  plants belonging to $\mathcal{P}$ is  (\ref{SC1}).
Now consider the case of RS estimation. For that, let    $\mathbf{\bar{P}}_i(s) \notin \mathcal{P}~ \forall~ i \in \{1,\dots,k, \dots, l, \dots, N\}$ are the perturbed   plants of  $\mathbf{P}_i(s) \in \mathcal{P}~\forall~ i \in \{1,\dots,k, \dots, l, \dots, N\}$, respectively. These  perturbed plants arises due to the perturbations  in the system matrix  of $\mathbf{P}_i(s) \in \mathcal{P}~\forall~ i \in \{1,\dots,k, \dots, l, \dots, N\}$. Following this, the system matrix of $\mathbf{\bar{P}}_i(s)$ be $\bar{A}_i=A_l+\Delta A_i$ with $|| \Delta A_i||_\infty < || \Delta A_{kl}||_\infty$. Here,  $\Delta A_i$ is the bounded perturbation of   $A_i$. Because of $|| \Delta A_i||_\infty \leq || \Delta A_{kl}||_\infty$,  it obvious that $|| \mathbf{e}(s)_{\mathbf{\bar{P}}_i \mathbf{\hat{P}}_l}||_\infty < || \mathbf{e}(s)_{\mathbf{P}^l_k \mathbf{\hat{P}}_l}||_\infty~\forall~i \in \{1,\dots, k, \dots, l, \dots, N\}$. So, if  $|| \mathbf{e}(s)_{\mathbf{P}^l_k \mathbf{\hat{P}}_l}||_\infty < \gamma$, then $|| \mathbf{e}(s)_{\mathbf{\bar{P}}_i \mathbf{\hat{P}}_l}||_\infty < \gamma ~\forall~i \in \{1,\dots, k, \dots, l, \dots, N\}$. Hence, concerning $\mathbf{\hat{P}}_l(s)$,  $|| \mathbf{e}(s)_{\mathbf{P}^l_{k} \mathbf{\hat{P}}_l}||_\infty < \gamma$ is the sufficient condition for the existence of $\mathbf{\hat{P}}_l(s)$ as the  RS estimator  of all the plants of $\mathcal{P}$. This establishes the proof of the theorem.
\end{proof}
Following Theorem \ref{thm}, the sufficient condition for the existence of the estimators, $\hat{\mathbf{P}}_l(s) \in \hat{\mathcal{P}}~\forall~l \in \{1, \dots, i, \dots, N\}$, as a RS estimator  is given by
\begin{align}
||\mathbf{e}(s)_{\mathbf{P}^l_{k} \mathbf{\hat{P}}_l}||_\infty < \gamma;~ \forall~l \in\{1, \dots, i, \dots, N\}
\label{cd10}
\end{align}
Furthermore, Theorem \ref{thm} suggests that the sufficient condition for  the existence of  $\mathbf{\hat{P}}_j(s)$ as the MERS estimator  is  given in  (\ref{cond3}).  Hence, to solve MERS estimation problem, we have to solve (\ref{cond3}). 

\subsubsection{Method to Solve the Minimum Error Robust Simultaneous Estimation Problem}
 We now present the method that solves the MERS estimation problem. The MERS estimator synthesis is about finding  $\mathbf{\hat{P}}_j(s)$ with a static gain $\mathbf{L} j$,  such that the condition given in (\ref{cond3}) is satisfied. Equation (\ref{cond3}) indicates that the solution of it follows by solving (\ref{cd10}). For this, the inequalities given in (\ref{cd10}) is formulated in-terms of linear matrix inequalities (LMIs) using bounded real lemma \cite{brl}. If there exists $Q_l >0 \in\mathcal{S}^{\hat{n}}$ and  $Y_l \in \mathbb{R}^{\hat{n} \times \hat{q}}$,  then from the bounded real lemma,  the LMIs corresponding to  (\ref{cd10}) for  $ 0 < \gamma  < 1$ are given by
 \begin{equation}
\begin{aligned}
\begin{bmatrix}
Q_lA_l + A_l^TQ_l-Y_lC-C^TY_l^T & Q_l \grave{B}_l-Y_l \grave{D} & C_z^T\\
*&-\gamma\mathbf{I}&0\\
*&*&-\gamma\mathbf{I}
\end{bmatrix}<0\\
Q_l>0;\\
\forall~l \in \{1,\dots, j, \dots, N\}
\end{aligned}
\label{lmi2}
\end{equation} 
where $\grave{B}_l$=[$\Delta A_{kl}$ $0$] and $\grave{D}$ =[$0$ $\mathbf{I} \in \mathbb{R}^{\hat{r} \times \hat{r}}$]. For a given $\gamma$, solve (\ref{lmi2}) for all  $Q_l$ and $Y_l$. Thereafter, recover all the estimator gain using $\mathbf{L}_l=Q_l^{-1}Y_l~\forall~l \in \{1, \dots, j, \dots, N\}$. The feasible solution of (\ref{lmi2}) establishes that there exists $N$ number of RS estimators. Now, using $\mathbf{L}_l=Q_l^{-1}Y_l~\forall~l \in \{1, \dots, j, \dots, N\}$, compute $|| \mathbf{e}(s)_{\mathbf{P}^l_{k} \mathbf{\hat{P}}_l}||_\infty~  \forall~ l \in \{1,\dots, l, \dots, N\}$  and  find its smallest value. Then, identify the RS estimator to which this smallest value belongs and  from (\ref{cond3}), this   estimator is the MERS estimator, $\mathbf{\hat{P}}_j(s)$.
If the solution of  (\ref{lmi2})  fails to satisfy the condition specified in (\ref{cd10}), then a dynamic compensator is required.\par 
 
 We now describe a method that induce the characteristics of a dynamic compensator in the error dynamics by solving  LMIs similar to  (\ref{lmi2}). In this method, the dynamic compensator is in the form of pre and post compensators that will be cascaded with the plant models belonging to $\mathcal{P}$. Now, let us consider the pre  compensator, $\mathbf{\widetilde{W}_{ei}}(s)$, and  the post compensator, $\mathbf{\widetilde{W}_{eo}}(s)$ which would be cascaded with the plants belonging to $\mathcal{P}$.  The  $\mathbf{\widetilde{W}_{ei}}(s)$ and $\mathbf{\widetilde{W}_{eo}}(s)$ are defined by (13) and (14), respectively, of \cite{jinjgcd} with $\hat{m}=3$ and $\hat{r}=5$ ((13) and (14) are also given in  supporting material). Also, the elements of $\mathbf{\widetilde{W}_{ei}}(s)$  and $\mathbf{\widetilde{W}}_{eo}(s)$   are given by (15) and (16), respectively of \cite{jinjgcd} ((15) and (16) are also given in  supporting material). The cascading of these compensators modifies the  characteristics of all the plants of $\mathcal{P}$. This in turn alters the characteristics of the estimation error dynamics. Following this and the MERS estimation problem, the objective  is to synthesize appropriate pre/post compensators along with suitable estimator gain that reduces the largest worst-case gain of the MERS estimator  ($||\mathbf{e}(s)_{\mathbf{P}^j_{k} \mathbf{\hat{P}}_j}||_\infty$) below a given  $\gamma$. The plant  realized after cascading, $\widetilde{\mathbf{P}}_i(s)$, is defined as $\widetilde{\mathbf{P}}_i(s)=\mathbf{\widetilde{W}_{eo}}(s)\mathbf{P}_i(s)\mathbf{\widetilde{W}_{ei}}(s)$. As $\mathbf{\widetilde{W}_{eo}}(s)$ is the post compensator, the $\mathbf{v}(t)$ act at its input. Then,  the output of $\mathbf{\widetilde{W}_{eo}}(s)$ from $\mathbf{v}(t)$, $\mathbf{\widetilde{v}}(t)$, is defined as $\mathbf{\widetilde{v}}(t)=\mathcal{L}^{-1}[\mathbf{\widetilde{W}_{eo}}(s)\mathbf{v}(s)](t)$. Here, $\mathcal{L}^{-1}$ denotes the inverse Laplace transform. Considering this, the state-space form  of $\mathbf{\widetilde{P}}_i(s)$ is given by
\begin{equation}
\mathbf{\widetilde{P}}_i(s):
\begin{cases}
\dot{\mathbf{\widetilde{x}}}_i(t)=&\widetilde{A}_{i}\mathbf{\widetilde{x}}_i(t)+\widetilde{B}\mathbf{\widetilde{u}}_{i}(t)\\
\mathbf{\widetilde{z}}_i(t)=&\widetilde{C}_z\mathbf{\widetilde{x}}_i(t)\\
\mathbf{\widetilde{y}}_i(t)=&\widetilde{C}\mathbf{\widetilde{x}}_i(t)+\mathbf{\widetilde{v}}(t)
\label{eqp1}
\end{cases}
\end{equation}
where $\mathbf{\bar{x}}_i(t)$=$[\begin{smallmatrix}
\mathbf{x}_{\mathbf{\widetilde{W}_{eo}}}~|&\mathbf{x}_i~|&\mathbf{x}_{\mathbf{\widetilde{W}_{ei}}}
\end{smallmatrix}]^T \in \mathbb{R}^{22}$ is the state vector. Here, $\mathbf{x}_{\mathbf{\widetilde{W}_{eo}}}(t) \in \mathbb{R}^{5}$ and $\mathbf{x}_{\mathbf{\widetilde{W}_{ei}}}(t) \in \mathbb{R}^{3}$ are the state vectors of post and pre compensators, respectively. Besides, $\widetilde{A}_{i} \in \mathbb{R}^{22 \times 22}$, $\widetilde{B} \in \mathbb{R}^{22 \times 3}$, and $\widetilde{C} \in \mathbb{R}^{5 \times 22}$ are the system matrix, the control input matrix, and the measurement matrix, respectively. Likewise, $\widetilde{C}_z$=$[\begin{smallmatrix}
\mathbf{0}_{3, 5}~|&C_z~|&\mathbf{0}_{3 , 3}
\end{smallmatrix}]$ is a constant matrix. Additionally,  $\mathbf{\widetilde{z}}_i(t)$ is the vector to be estimated. The characteristic of $\mathbf{\widetilde{x}}_i(t)$ and $\widetilde{C}_z$ suggests $\mathbf{\widetilde{z}}_i(t)$=$\mathbf{z}_i(t)$.
Now, consider a finite set, $\widetilde{\mathcal{P}} =\{\widetilde{\mathbf{P}}_i(s) \in \mathcal{RL_\infty}~|~\widetilde{\mathbf{P}}_i(s)=\mathbf{\widetilde{W}_{eo}}(s)\mathbf{P}_i(s)\mathbf{\widetilde{W}_{ei}}(s), \mathbf{P}_i(s) \in \mathcal{P}, ~ i\in\{1,\dots, l, \dots, N\} \}$. The state-space form of any plant belongs to $\widetilde{\mathcal{P}}$ is given by (\ref{eqp1}).
Now, consider $N$ number of   estimators, $\mathbf{\check{P}}_1(s), \dots, \mathbf{\check{P}}_i(s),\dots,\mathbf{\check{P}}_l(s),\dots,\mathbf{\check{P}}_N(s)$, each formed using state-space matrices of  $\mathbf{\widetilde{P}}_1(s) \in \mathcal{\widetilde{P}},\dots,\mathbf{\widetilde{P}}_i(s) \in \mathcal{\widetilde{P}}, \dots,\mathbf{\widetilde{P}}_l(s) \in \mathcal{\widetilde{P}}, \dots, \mathbf{\widetilde{P}}_N(s) \in \mathcal{\widetilde{P}}$, respectively. Let  the finite set,  $\check{\mathcal{P}}$, contains all these $N$ estimators. The state-space forms of these estimators are given as
\begin{equation}
\mathbf{\check{P}}_1(s):
\begin{cases}
\dot{\mathbf{\check{\mathbf{x}}}}_{l}(t)={}&\widetilde{A}_l\mathbf{\check{x}}_l(t)+\widetilde{B}\mathbf{\widetilde{u}}(t)+\mathbf{\check{L}}_l (\mathbf{\widetilde{y}}(t)-\widetilde{C}\mathbf{\check{x}}_l(t))\\
\mathbf{\check{z}}_{l}(t)={}&\widetilde{C}_z\mathbf{\check{x}}_{l}(t); \forall~ l \in \{1,\dots, i, \dots, j, \dots, N\}
\end{cases}
\label{eqest-2}
\end{equation}
where  $\mathbf{\widetilde{u}}(t)$ $\in   \{\mathbf{\widetilde{u}}_1(t),\dots,\mathbf{\widetilde{u}}_i(t),\dots,\mathbf{\widetilde{u}}_N(t)\}$  and $\mathbf{\widetilde{y}}(t) \in \{\mathbf{\widetilde{y}}_1(t),\dots,\mathbf{\widetilde{y}}_i(t),\dots,\mathbf{\widetilde{y}}_N(t)\}$. Here, $\mathbf{\widetilde{u}}_i(t)$ and $\mathbf{\widetilde{y}}_i(t)$ are the control input and measurement vectors of $\mathbf{\widetilde{P}}_i(s) \in \mathcal{\widetilde{P}}$, respectively. Also, $\mathbf{\check{\mathbf{x}}}_l(t) \in \mathbb{R}^{22}$ and   $\mathbf{\check{\mathbf{u}}}_l(t) \in \mathbb{R}^{3}$ is the state  vectors of $\mathbf{\check{P}}_i(s)$. Additionally, $\mathbf{\check{\mathbf{z}}}_l(t) \in \mathbb{R}^{3}$ is the estimate of $\mathbf{\widetilde{\mathbf{z}}}_l(t)$. Using Theorem \ref{thm}, the sufficient condition for the existence of estimators, $\mathbf{\check{P}}_l(s) \in \check{\mathcal{P}}~ \forall ~l \in \{1, \dots, i, \dots, j, \dots, N\}$, as a RS estimator  is given by
\begin{align}
||\mathbf{e}(s)_{\mathbf{\widetilde{P}}^l_{k} \mathbf{\check{P}}_l}||_\infty < \gamma;~ \forall~l \in\{1, \dots, i, \dots, j, \dots, N\}
\label{cd-10}
\end{align}
We solve (\ref{cd-10}) by solving the equivalent LMIs to obtain the estimator gains, $\mathbf{\check{L}}_l~\forall~ l \in \{1,\dots, i, \dots, j, \dots, N\}$, mentioned in (\ref{lmi4}). If there exists $\check{Q}_l >0 \in\mathcal{S}^{n}$ and  $\check{Y}_l \in \mathbb{R}^{n \times \hat{q}}$,  then from the bounded real lemma \cite{brl},  the LMIs corresponding to  (\ref{cd-10}) for a $ 0 < \gamma  \leq 1$ are given by
\begin{equation}
\begin{aligned}
\begin{bmatrix}
\check{Q}_l \widetilde{A}_l + \widetilde{A}_l^T\check{Q}_l-\check{Y}_l\widetilde{C}-\widetilde{C}^T\check{Y}_l^T & \check{Q}_l \check{B}_l-\check{Y}_l \check{D} & \widetilde{C}_z^T\\
*&-\gamma\mathbf{I}&0\\
*&*&-\gamma\mathbf{I}
\end{bmatrix}<0\\
\check{Q}_l>0\\
\forall~ l \in \{1,\dots, i, \dots, j, \dots, N\}
\end{aligned}
\label{lmi4}
\end{equation} 
where $\check{B}_l$=[$\Delta \widetilde{A}_{kl}$ $0$] and $\check{D}$ =[$0$ $\mathbf{I} \in \mathbb{R}^{5 \times 5}$]. Here, $\Delta \widetilde{A}_{kl}$ is the difference between the system matrices of $\widetilde{\mathbf{P}}^l_k(s)$ and $\check{\mathbf{P}}_l(s)$. Now,  for a given $\gamma$, solve (\ref{lmi4}) for all  $\check{Q}_l$ and $\check{Y}_l$. Thereafter, recover all the estimator gain using $\mathbf{\check{L}}_l=\check{Q}_l^{-1}\check{Y}_l~\forall~l \in \{1, \dots, i, \dots,  j, \dots, N\}$.

 Theorem~\ref{thm} suggests that the estimators belong to $\mathcal{\check{P}}$ become RS estimators with $\mathbf{\check{L}}_l~\forall~ l \in \{1,\dots, i, \dots, j, \dots, N\}$  attained by solving (\ref{lmi4}). Among these RS estimators, let $\mathbf{\check{P}}_j(s)$ be the MERS estimator. Then,  following (\ref{cond2}),  $\mathbf{\check{P}}_j(s)$ needs to satisfy the condition given by
\begin{align}
\begin{split}
||\mathbf{e}(s)_{\mathbf{\widetilde{P}}^j_{k} \mathbf{\check{P}}_j}||_\infty=& \min \{||\mathbf{e}(s)_{\mathbf{\widetilde{P}}^l_{k} \mathbf{\check{P}}_l}||_\infty~\forall~l\in\{1,  \dots, j, \dots, N\}  \}  \\ & \qquad  < \gamma
\end{split}
\label{cond4}
\end{align}
Assume there exists suitable $\mathbf{\widetilde{W}_{ei}}(s)$, $\mathbf{\widetilde{W}_{eo}}(s)$, $\mathbf{\check{L}}_j$ such that $\mathbf{\check{P}}_j(s)$ satisfies (\ref{cond4}). Then, the  state-space form of $\mathbf{\check{P}}_j(s)$ is given by
\begin{equation}
\mathbf{\check{P}}_j(s):
\begin{cases}
\dot{\mathbf{\check{\mathbf{x}}}}_{j}(t)={}&\widetilde{A}_j\mathbf{\check{x}}_j(t)+\widetilde{B}\mathbf{\widetilde{u}}(t)+\mathbf{\check{L}}_j (\mathbf{\widetilde{y}}(t)-\widetilde{C}\mathbf{\check{x}}_j(t))\\
\mathbf{\check{z}}_{j}(t)={}&\widetilde{C}_z\mathbf{\check{x}}_{j}(t)
\end{cases}
\label{eqest-3}
\end{equation}
Equation (\ref{eqest-3}) indicates that $\mathbf{\check{u}}(t)$ and $\mathbf{\check{y}}(t)$ are required for the implementation of $\mathbf{\check{P}}_j(s)$. But only the information of $\mathbf{u}(t)$ and $\mathbf{y}(t)$ are available. In that case, we need to obtain $\mathbf{\check{y}}_i(t)$ using $\mathbf{\check{y}}_i(t)=\mathcal{L}^{-1}[\mathbf{\widetilde{W}_{eo}}(s)\mathbf{y}(s)](t)$ and $\mathbf{\check{u}}_i(t)$ using $\mathbf{\check{u}}_i(t)=\mathcal{L}^{-1}[\mathbf{\widetilde{W}_{ei}}^{-1}(s)\mathbf{u}(s)](t)$. Hence, to realize $\mathbf{\check{P}}_j(s)$, there should exists $\mathbf{\widetilde{W}_{eo}}(s) \in \mathcal{RH}_\infty$, $\mathbf{\widetilde{W}_{ei}}(s) \in \mathcal{RH}_\infty$,  $\mathbf{\widetilde{W}_{ei}}^{-1}(s) \in \mathcal{RH}_\infty$, and $\mathbf{\check{L}}_j=\check{Q}_l\check{Y}_l$ that satisfy (\ref{cond4}). There is no closed-form solution that provides   $\mathbf{\widetilde{W}_{ei}}(s)$,  $\mathbf{\widetilde{W}_{eo}}(s)$, and  $\mathbf{\check{L}}_j$ that  satisfies (\ref{cond4}). Hence, to obtain these feasible compensators and estimator gain, an optimization problem is formulated and expressed as
\begin{equation}
\begin{aligned}
& \underset{\mathbf{\widetilde{W}_{ei}}, \mathbf{\widetilde{W}_{eo}}, 
	\mathbf{\check{L}}_{l}}{\text{minimize}}
& & J= \min \{||\mathbf{e}(s)_{\mathbf{\widetilde{P}}^l_{k} \mathbf{\check{P}}_l}||_\infty~|~l \in \{1, \dots, j, \dots, N\} \} \\
& \text{subject to}
& & \mathbf{\check{L}}_{l}=\check{Q}_l^{-1}\check{Y}_l ~\forall~l \in \{1,\dots, j, \dots, N\} \\
& & & \mathbf{\widetilde{W}_{ei}}(s) \in \mathcal{RH}_\infty,  \mathbf{\widetilde{W}_{eo}}(s) \in \mathcal{RH}_\infty, \\
& & & \mathbf{\widetilde{W}_{ei}}^{-1}(s) \in \mathcal{RH}_\infty
\end{aligned}
\label{pbm}
\end{equation}
\noindent where $\check{Q}_l$ and $\check{Y}_l$ are obtained by solving (\ref{lmi4}). In (\ref{pbm}), the performance index is the pointwise minimum of $N$ convex function. The pointwise minimum of $N$ convex function may not be convex. Hence, a genetic algorithm based iterative approch refered  as MERSE  algorithm is developed to solve the problem given in (\ref{pbm}). This algorithm has a population-based GA solver where GA employs the same steps of GA-SCP and GA-RSSD solver mentioned in \cite{jinsmc} (these steps are also given in  supporting material).

\noindent \textit{Search Variables:} The search variables of  GA solver  are the coefficients of  $\mathbf{\widetilde{W}_{ei}}(s)$ and $\mathbf{\widetilde{W}_{eo}}(s)$. The feasible values of these search variables are those which  satisfy the constraints,  $\mathbf{\widetilde{W}_{ei}}(s) \in \mathcal{RH}_\infty$,  $\mathbf{\widetilde{W}_{eo}}(s) \in \mathcal{RH}_\infty$, and $ \mathbf{\widetilde{W}_{ei}}^{-1}(s) \in \mathcal{RH}_\infty$.\\
 \textit{Fitness functions:} The fitness function of GA solver is the performance index, $J$. \\
\textit{Termination Conditions:} The iterative  terminates when the number of generation of GA exceeds its maximum value.\\
The pseudocode of MERSE  algorithm is given in \textit{Algorithm 1}. The RS estimation using  MERS estimator   is depicted in Fig. \ref{sest3}.  This consists of the MERS estimator (shown inside the orange box) obtained from \textit{Algorithm 1}. The blocks in Fig. \ref{sest3},   $\mathbf{\widetilde{W}_{eo}}(s)\stackrel{s}{=}\left[\begin{array}{c|c}A_{eo}&B_{eo}\\\hline C_{eo}&D_{eo}\end{array}\right]$ and $\mathbf{\widetilde{W}_{ei}}^{-1}(s)\stackrel{s}{=}\left[\begin{array}{c|c}\underline{A}_{ei}&\underline{B}_{ei}\\\hline \underline{C}_{ei}&\underline{D}_{e1}\end{array}\right]$ denote the state-space representation of   $\mathbf{\widetilde{W}_{eo}}(s)$ and  $\mathbf{\widetilde{W}_{ei}}^{-1}(s)$, respectively.
\begin{algorithm}[h!]
	\caption{Pseudocode of MERSE algorithm}
	\begin{algorithmic}[1]
		\State Initialize: Genetic algorithm
		\State Input: $\mathcal{P}$, maximum number of generations, and $\gamma$  
		\If{number of generation of GA solver $\leq$ maximum value}
		\State GA obtain feasible values of search variables
		\State Compute: $\mathbf{\widetilde{W}_{ei}}(s)$ and $\mathbf{\widetilde{W}_{eo}}(s)$ using (15) and (16) of \cite{jinjgcd}.
		\State Compute: $\mathbf{\check{L}}_l~ \forall~l~ \in \{1,\dots, j, \dots, N\}$ by solving N LMIs given in (\ref{lmi4}).
		\State Compute: $J$ for fitness evaluation
		\State Fitness value of GA= $J$
		\State go to 3
		\Else
		\If{Fitness value $<$ 1} 
		\State Find $\mathbf{\widetilde{P}}_j(s)$ and $\mathbf{\check{L}}_j$
		\State Output: feasible  $\mathbf{\widetilde{W}_{ei}}(s)$ and $\mathbf{\widetilde{W}_{eo}}(s)$, $\mathbf{\widetilde{P}}_j(s)$, $\mathbf{\check{L}}_j$
		\State Exit
		\Else
		\State Output: no feasible solution to (\ref{pbm})
		\State Exit
		\EndIf
		
		\EndIf
		\label{Alg-1}
	\end{algorithmic}
\end{algorithm}
\begin{figure}[h!]
	\centering
	\includegraphics[width=3.4in, height=2.2in]{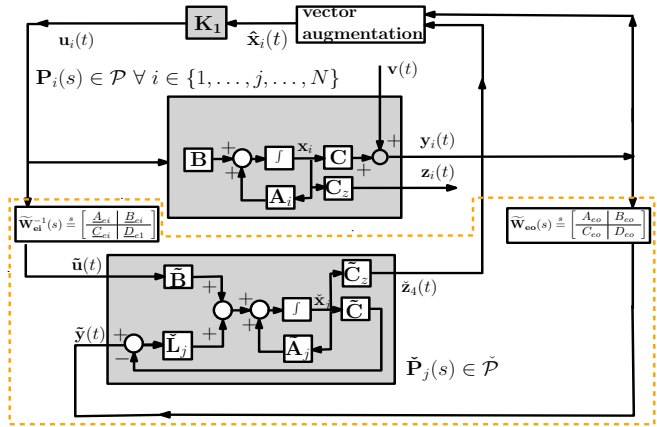}
	\caption{MERS estimator  for the NAV  (from \textit{Algorithm 1})  shown inside the orange box}
	\label{sest3}
\end{figure}
\subsection{Gap Reducing Compensators}
The development of  the GRC algorithm  is presented in this section. The gap reducing compensators obtained through the GRC algorithm are augmented with the plants in $\mathcal{P}$ in the implementation of GRMERS estimator to further reduce the estimation error obtained from the MERS estimator. The method developed in this section depends only on the plants in $\mathcal{P}$ and is independent of the MERS compensators $\mathbf{\widetilde{W}_{ei}}(s)$, $\mathbf{\widetilde{W}_{eo}}(s)$ and the gain $\mathbf{\check{L}}_j$. From the output $\mathbf{\widetilde{P}}_j(s)$ of the MERSE algorithm, the maximum value of  $\nu-$gap metric of the associated plant $\mathbf{{P}}_j(s) \in \mathcal{P}$ with the other plants in $\mathcal{P}$ is reduced further by adding suitable pre and post compensators to the plants in $\mathcal{P}$. From the definition of $\mathcal{G}(\mathbf{P}_j(s))$, making $\mathcal{G}(\mathbf{P}_i(s))~ \forall ~i \in \{1,\dots,N\}\setminus \{j\}$ closer to $\mathcal{G}(\mathbf{P}_j)$ increases closeness between $(\mathbf{u}_i , \mathbf{y}_i)~ \forall~ i \in \{1,\dots,N\} \setminus \{j\}$ and $(\mathbf{u}_j, \mathbf{y}_j)$. To make  $\mathcal{G}(\mathbf{P}_i(s))~ \forall~ i \in \{1,\dots,N\}\setminus \{j\}$ closer to $\mathcal{G}(\mathbf{P}_j(s))$, the \textit{gap} between $\mathcal{G}(\mathbf{P}_i(s))~ \forall~ i \in \{1,\dots,N\}\setminus \{j\}$ and $\mathcal{G}(\mathbf{P}_j(s))$ needs to be reduced. For developing an algorithm that minimizes the \textit{gap} between $\mathcal{G}(\mathbf{P}_i(s))~ \forall~ i \in \{1,\dots,N\}\setminus \{j\}$ and $\mathcal{G}(\mathbf{P}_j(s))$, it is necessary to compute the \textit{gap} between the \textit{graphs}. Let $\delta_v(\mathbf{P}_1(j\omega),\mathbf{P}_2(j\omega)) \in [0,1]$ be the $v$-gap metric (see \cite{jinjgcd} for details) between two plants,$\mathbf{P}_1(s)$ and $\mathbf{P}_2(s)$. Then, the \textit{gap} between two \textit{graphs} is given by \cite{vinfre}
\begin{align}
gap(\mathcal{G}(\mathbf{P}_1), \mathcal{G}(\mathbf{P}_2))=\delta_v(\mathbf{P}_1(j\omega),\mathbf{P}_2(j\omega))
\label{gap}
\end{align}
Using (\ref{gap}), \textit{gap} between two \textit{graphs} is computed. Let $\epsilon_{\mathbf{P}_j}$ denotes the maximum $v-$gap metric of $\mathbf{P}_j(s)$. Then, $\epsilon_{\mathbf{P}_j}=\max\big\{gap\big(\mathbf{P}_{j}(j\omega),\mathbf{P}_i(j\omega)\big)~\big|~ \mathbf{P}_{j}(s), \mathbf{P}_i(s)  \in \mathcal{P}~~ \forall~i \in \{1,2,\dots,N\}  \big\} $. Now, using (\ref{gap}),  $\epsilon_{\mathbf{P}_j}$ is rewritten as
\begin{align}
\begin{split}
\epsilon_{\mathbf{P}_j}=\max\big\{&\delta_v\big(\mathbf{P}_{j}(j\omega),\mathbf{P}_i(j\omega)\big)~\big|~ \mathbf{P}_{j}(s), \mathbf{P}_i(s) \\& \quad \in \mathcal{P}~~ \forall~i \in \{1,2,\dots,N\}  \big\}  
\end{split}
\label{epi}
\end{align}
Equation (\ref{epi}) suggests that  to make  $\mathcal{G}(\mathbf{P}_i)~ \forall~ i \in \{1,\dots,N\}\setminus\{j\}$ closer to $\mathcal{G}(\mathbf{P}_j)$, we need to reduce $\epsilon_{\mathbf{P}_j}$ and bring it closer towards zero.  
The maximum \textit{gap} of $\mathbf{P}_j(s)$  is  improved by cascading these models with suitable pre and post compensators, $\mathbf{W_{in}}(s) \in \mathcal{RH_\infty}$  and $\mathbf{W_{ot}}(s) \in \mathcal{RH_\infty}$, respectively \cite{jinjgcd}. Simultaneously, if required,  these  compensators can be employed to improve the frequency characteristics of the plants in $\mathcal{P}$. The basic structure of $\mathbf{W_{in}}(s)$  and  $\mathbf{W_{ot}}(s)$ are the same as that of  $\mathbf{\widetilde{W}_{in}}(s)$ and  $\mathbf{\widetilde{W}_{ot}}(s)$, respectively. However, $\mathbf{W_{ot}}(s)$ needs to be strictly proper. Let $\kappa= \big\{ \mathbf{\acute{P}}_i(s) \in \mathcal{RL_\infty}~ \big| ~  \mathbf{\acute{P}}_i(s)= \mathbf{W_{ot}}(s)\mathbf{P}_i(s)\mathbf{W_{in}}(s),~$ $\mathbf{P}_i(s)$ $\in \mathcal{P}$, $\mathbf{W_{ot}}(s)$ $\in$ $\mathcal{RH_\infty}$,   $\mathbf{W_{in}}(s)$~$\in$ $\mathcal{RH_\infty}$, $\forall~i \in \{1,2,\dots,N\}\big\}$. Now,  $\acute{\epsilon}_{\mathbf{\acute{P}}_j}$~is defined as
 \begin{align}
\begin{split}
\acute{\mathbf{\epsilon}}_{\mathbf{\acute{P}}_j}=\max\big\{&\delta_v\big(\mathbf{\acute{P}}_{j}(j\omega),\mathbf{\acute{P}}_i(j\omega)\big)~\big|~ \mathbf{\acute{P}}_{j}(s), \mathbf{\acute{P}}_i(s) \\& \quad \in \kappa, ~~ \forall~i \in \{1,2,\dots,N\}  \big\}  
 \end{split}
 \label{epi1}
 \end{align}
Then, from the GR compensators problem statement, the feasible   $\mathbf{W_{in}}(s)$ and $\mathbf{W_{ot}}(s)$ are those that achieve the following.
\begin{enumerate}
	\item  $\acute{\mathbf{\epsilon}}_{\mathbf{\acute{P}}_j} < \epsilon_{\mathbf{P}_j}$ and bring $\hat{\epsilon}_{\mathbf{\tilde{P}}_j}$ closer to zero.
	
	\item $\mathbf{W_{in}}(s)$ and $\mathbf{W_{ot}}(s)$  induce desired frequency characteristics on all the plants   belonging to $\kappa$. 
\end{enumerate}
\noindent As there does not exist any closed-form solution for the  feasible $\mathbf{W_{in}}(s)$ and $\mathbf{W_{ot}}(s)$, an optimization problem is formulated and is given by

\begin{equation}
\begin{aligned}
& \underset{\mathbf{\acute{Q}}}{\text{minimize}}
& & J_1=\acute{\epsilon}_{\mathbf{\acute{P}}_j}\\
& \text{subject to}
& & \text{1) \textit{Bound constraints on the coefficients of  pre} }\\
& & &\text{~~~~\textit{and post compensators}} \\ 
& & & \text{2) \textit{No pole-zero cancellation between} }\\
& & & \text{~~~\textit{compensators and the plants of $\mathcal{P}$}}
\end{aligned}
\label{pbm1}
\end{equation}
In    (\ref{pbm1}), $\mathbf{\acute{Q}}$  represents the set that contains the coefficients of $\mathbf{W_{in}}(s)$ and $\mathbf{W_{ot}}(s)$.
The bound constraints on the coefficients of  compensators  provide desired frequency characteristics to the plants of $\mathcal{P}$. These constraints prevent the minimization of  $J_1$  with any $\mathbf{W_{in}}(s)$ and $\mathbf{W_{ot}}(s)$ that degrade the frequency characteristics of all the augmented plants.  Note that the pre and post compensators are physically present in the closed-loop and therefore, these compensators need to be appended to the hardware. The performance index of (\ref{pbm1}) is non-convex and non-smooth. Hence, the optimization problem given in (\ref{pbm1}) is solved  using an iterative algorithm referred to as the GRC algorithm  that has a population-based genetic algorithm (GA) solver.  In that solver, GA employs the same  steps as in MERSE algorithm.\par

\noindent \textit{Search Variables:} The search variables of the GA solver  are the coefficients of $\mathbf{W_{ot}}(s)$ and $\mathbf{W_{in}}(s)$. The feasible values of these search variables are those which  satisfy all the constraints of the  problem given in (\ref{pbm1}).\\
\textit{Fitness functions:} The fitness function of the GA solver is the performance index, $J_1$, of  the optimization  problem given in (\ref{pbm1}). \\
\textit{Termination Conditions:} The iteration  terminates when the number of generations of GA solver exceeds the set maximum value.\par
\noindent The pseudocode for the iterative  algorithm is given in \textit{Algorithm 2}. 
The RS estimation using  GRMERS estimator   is depicted in Fig. \ref{sest2}.  This figure consists of the MERS estimator (shown inside the orange box) obtained from \textit{Algorithm 1}  and the gap reducing compensators (shown inside the blue box) obtained from \textit{Algorithm 2}. In Fig. \ref{sest2}, the blocks,  $\mathbf{W_{in}}(s)\stackrel{s}{=}\left[\begin{array}{c|c}A_{in}&B_{in}\\\hline C_{in}&D_{in}\end{array}\right]$ and $\mathbf{W_{ot}}(s)\stackrel{s}{=}\left[\begin{array}{c|c}A_{ot}&B_{ot}\\\hline C_{ot}&D_{ot}\end{array}\right]$ denote the state-space representation of  $\mathbf{W_{in}}(s)$ and  $\mathbf{W_{ot}}(s)$ respectively.
The  plants of  $\mathcal{P}$ are unstable and hence they are stabilized using  a full state feedback controller, $\mathbf{K}$ that  uses all the states of  $\mathbf{W_{ot}}(s)\mathbf{P}_i(s)\mathbf{W_{in}}(s)$ for feedback . These states  are obtained by augmenting   states of $\mathbf{W_{in}}(s)$ and $\mathbf{W_{ot}}(s)$ and the estimated states of $\mathbf{P}_i(s)$, $\mathbf{\hat{x}}_i(t)$ as shown in in Fig. \ref{sest2}. Here,  $\mathbf{\hat{x}}_i(t)$ is obtained by augmenting $\mathbf{{y}}_i(t)$   with $\mathbf{\widetilde{z}}_i(t)$.  Note that if we employ   MERS estimator only, then the full state vector for feedback  is $\mathbf{\hat{x}}_i(t)$.

\begin{algorithm}[h!]
	\caption{Pseudocode of GRC algorithm}
	\begin{algorithmic}[1]
		\State Initialize: Genetic algorithm
		\State Input: $\mathcal{P}$, $\mathbf{{P}}_j(s)$ (from \textit{Algorithm 1}), maximum value of generations, and $\epsilon_{\mathbf{P}_j}$.
		\If{number of generation of GA solver $\leq$ maximum value}
		\State GA obtain feasible values of search variables
		\State Compute: $\mathbf{W_{in}}(s)$ and $\mathbf{W_{ot}}(s)$
		\State Compute: $J_1$ for fitness evaluation
		\State Fitness value of GA= $J_1$
		\State go to 3
		\Else
			\If{Fitness value $<$ $\epsilon_{\mathbf{P}_j}$} 
			 \State Output: feasible compensators $\mathbf{W_{in}}$ and $\mathbf{W_{ot}}$, $J_1^*$=Fitness value 
			 \State Exit
			\Else
			  \State Output: no feasible compensators
			  \State Exit
			\EndIf		 
		\EndIf
		\label{Alg-2}
	\end{algorithmic}
\end{algorithm}
\begin{figure}[h!]
	\centering
	\includegraphics[width=3.6in, height=2.4in]{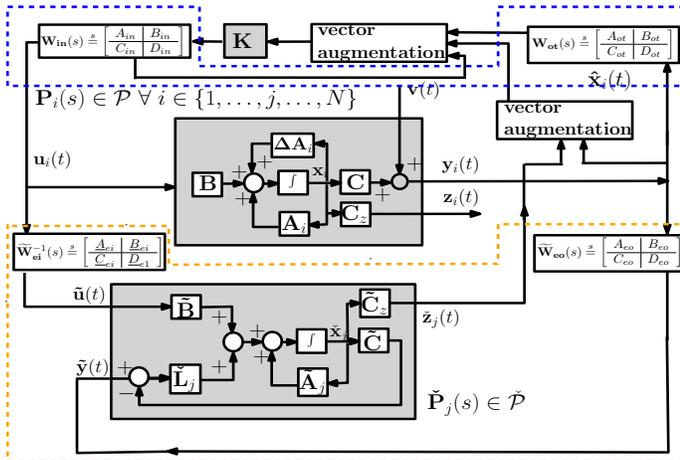}
	\caption{GRMERS estimator  for the NAV with the MERS estimator (from \textit{Algorithm 1}) and the GR compensators (from \textit{Algorithm 2}) shown inside the orange and blue boxes, respectively}
	\label{sest2}
\end{figure}

 \section{Synthesis of the GRMERS Estimator for the candidate NAV}\label{DPE}
In this section, the synthesis of a GRMERS estimator for the candidate NAV described in  Section II-III of \cite{jin}  is presented. To this end,  a GRMERS estimator is designed for four unstable MIMO  plants ($\mathcal{P}=\{\mathbf{P}_1(s),\mathbf{P}_2(s),\mathbf{P}_3(s),\mathbf{P}_4(s)\}$)    of this NAV  by designing a suitable MERS estimator and the GR compensators. The plant, $\mathbf{P}_1(s)$ is associated with the  steady turn and  climb  flight condition  at $Va$ (air speed)  of $9$~m/s, climb rate ($\dot{h}$) of $1$~m/s, turn radius ($R$) of $30$~m. Similarly, $\mathbf{P}_2(s)$, $\mathbf{P}_3(s)$ and $\mathbf{P}_3(4)$ are associated with the  flight condition  at ($Va=10$~m/s, $\dot{h}=1$~m/s, $R=30$~m), ($Va=10$~m/s, $\dot{h}=0.5$~m/s, $R=30$~m) and ($Va=10$~m/s, $\dot{h}=0$~m/s, $R=30$~m) respectively.  The state-space matrices of all the four plants including actuator dynamics are given in the supporting material. 
 \subsection{Synthesis of MERS estimator}
For synthesizing  the MERS estimator, the  MERSE algorithm  was run  with the $\gamma=1$ and the  maximum number of generations  set to 200. This algorithm  provides $\mathbf{\widetilde{P}}_j(s)=\mathbf{\widetilde{P}}_4(s)$ with $||\mathbf{e}_{\mathbf{\widetilde{P}}_{k_4} \mathbf{{\check{P}}}_{4}}(s)||_\infty=0.65$. 
The corresponding compensators $\mathbf{\widetilde{W}_{ei}}(s)$ and  $\mathbf{\widetilde{W}_{eo}}(s)$  are given by

\begin{small}
\begin{align}
 \mathbf{\widetilde{W}_{ei}}(s)=&{}&diag\bigg[\frac{47.3 s + 13.85}{s + 8.196},\frac{1.243 s + 0.6176}{ s + 2.576},\frac{0.05965 s + 1.118}{ s + 0.7056}\bigg]\label{wine1}
 \end{align}
\end{small}

 \begin{small}
 	\begin{align}
 	\begin{split}\label{wone1}
 	\mathbf{\widetilde{W}_{eo}}(s)={}&diag\bigg[\frac{323.6 s + 517.2}{s + 0.932},\frac{2447 s + 3710}{s + 0.4487},\frac{228.1 s + 163}{s + 1.478},\\& \qquad \frac{1453 s + 803.5}{s + 0.6791}, \frac{867.4 s + 564.4}{s + 0.8048}\bigg]
	\end{split}
 	\end{align}
\end{small}\noindent Now, the MERS  estimator is  $\mathbf{{\check{P}}}_4(s)$ with estimator gain , $\mathbf{\check{L}}_4 \in \mathbb{R}^{19 \times 5}$. This gain  is given in the supporting material and the state-space form of  $\mathbf{{\check{P}}}_4(s)$ is given by
\begin{equation}
\mathbf{{\check{P}}}_4(s):
 	\begin{cases}
 \dot{\mathbf{\check{x}}}_{4}={}&\widetilde{A}_4\mathbf{\check{x}}_4+\widetilde{B}_4\mathbf{\widetilde{u}}_i+\mathbf{\check{L}}_4 (\mathbf{\widetilde{y}}_i-\widetilde{C}\mathbf{\check{x}}_i )\\
 \mathbf{\check{z}}_{4}={}&\widetilde{C}_z\mathbf{\check{x}}_{4}
 \label{eqest12}
 \end{cases}
 \end{equation}
 The  dynamics of $\mathbf{\check{{P}}}_4(s)$ is asymptotically stable as  eigenvalues of $\mathbf{\check{{P}}}_4(s)$ belongs to $\mathcal{C}_-$ as shown in Fig. 3 of the supporting material.
\subsection{Synthesis of GR compensators}
The maximum $v-$gap metric of  the plant associated with the MERS estimator, $\mathbf{P}_4(s)$, is $\epsilon_{{P}_4}=0.2915$.  The GRC algorithm  was run to find the pre and post compensators that reduce the maximum $v-$gap metric associated with the $\mathbf{P}_4(s)$. These compensators  are  given by

 \begin{small}
 	\begin{align}
 	\begin{split}
 	\mathbf{W_{in}}(s)={}&diag\bigg[\frac{6723 s + 6409}{s + 1},\frac{0.00026 s + 0.06368}{ s + 2.333},\\ & \qquad \quad \frac{0.00278 s + 0.9258}{s + 0.2322}\bigg]\label{wine}
 	\end{split}
	\end{align}
 \end{small}
 \begin{small}
 	\begin{align}
 	\begin{split}\label{wone}
 	\mathbf{W_{ot}}(s)={}&diag\bigg[\frac{0.0002834}{ s + 0.3444},\frac{0.0002312}{s + 0.7408},\frac{0.003528}{ s + 57.93},\\& \qquad \frac{0.00005}{ s + 0.7941}, \frac{15.27}{ s + 2.917}\bigg]
 	\end{split}
	\end{align}
 \end{small}
Also, the obtained value of $J_1^*$ is $0.02793$ and is considerably  lower than $\epsilon_{\mathbf{P}_4}(s)=0.2915$. This suggests that the closeness between $\mathbf{\acute{P}}_4(s)$ and other plants in $\kappa$  is increased. 

\subsection{Stability,  performance, and robustness of  GRMERS  estimator}
In this subsection, a study has been conducted to evaluate the stability, performance, and robustness of the GRMERS estimator  synthesised for the plants belonging to $\mathcal{P}$. For this, the GRMERS estimator implementation   follows the architecture   shown  in Fig. \ref{sest2} with $j=4$.  To study the effectiveness of the GR compensators, we  utilize the MERS estimator  to estimate the desired states of  all the nominal plants of $\mathcal{P}$. The MERS estimator implementation  for this purpose follows the architecture   shown  in Fig. \ref{sest3} with $j=4$. It is necessary to compare the nominal and robust performance of the GRMERS estimator while estimating all the plants belonging to $\mathcal{P}$ with a benchmark estimator. If the GRMERS estimator is designed to estimate a single plant alone, then the GR compensators are not required to improve the performance. In that case, the GRMERS estimator can resemble a $H_\infty$ filter.  Following this, a single $H_\infty$ filter is designed for each nominal plant belonging to $\mathcal{P}$ (refer to the supporting material for more details). The nominal and robust performances of these individual filters while estimating the corresponding nominal and perturbed plants with measurement noise are then evaluated and compared with the performance of the GRMERS estimator while estimating all the plants belonging to $\mathcal{P}$.
The measurement noise considered in the simulations is that of the rate-gyro which is used for measuring $p$, $q$, and $r$. For simulating the effect of this noise in MATLAB, the  noise with a  zero mean, RMS (root mean squared) value of $0.06$~${}^{\circ}/s$, and a spectral density of $0.005$~$\frac{{}^{\circ}/s}{\sqrt{Hz}}$ \cite{jinthesis} is added to  $p$, $q$, and $r$ (rate-gyro output). In all  the simulations, the plants are excited with a doublet thrust input. Since all the four plants are unstable,  a static full state feedback controller was implemented first to make them stable. The controller, $\mathbf{K_1} \in \mathbb{R}^{3 \times 11}$ associated with both the MERS estimator and $H_\infty$ filter are the same. However,  the controller, $\mathbf{K} \in \mathbb{R}^{3 \times 19}$ used along with the GRMERS estimator has different dimension because of the presence of the GR compensators. The details of the controllers are given in the supporting material.
\subsubsection{Stability and nominal performance analysis}
The nominal performance of the estimators is obtained through the estimation of the nominal plants,   $ \mathbf{P}_i(s) \in \mathcal{P}~\forall~i~\in~\{1,\dots,4\}$. The estimated vectors of these plants are: $\mathbf{z}_i(t) = [u(t),\, v(t),\, w(t),\, \delta_e(t),\, \delta_T(t), \, \delta_r(t)]^T~\forall~i~\in~\{1,\dots,4\}$.
The  boundedness of the estimated vector and the estimation errors indicate that the GRMERS  and MERS estimator are stable (refer to the supporting material for the plots of the estimated vector and the estimation errors). 
In this paper, the Normalized-Root-Mean-Squared Error (NRMSE), $x^{NE}$ is used as a quantitative measure for the estimator's performance in estimating any scalar variable, $x(t)$. The NRMSE, $x^{NE}$, of $x(t)$ is defined as
\begin{equation}
  x^{NE} =\frac{\sqrt{\frac{1}{T}\sum_{\acute{t}=1}^{T}(x^{\acute{t}}-\hat{x}^{\acute{t}})}}{max(x)-min(x)}
\end{equation}
 where $T$ is the number of observation, $x^{\acute{t}}$ is the $\acute{t}$-th   observation of $x$, $\hat{x}^{\acute{t}}$ is the $\acute{t}$-th estimation of $x$, $max(x)$ is the maximum value of $x$, and $min(x)$ is the minimum value of $x$. 
Now, the normalized error vector, $\mathbf{z}_i^{e}$, of  $\mathbf{z}_i(t)$   is defined as
 \begin{equation}
   \mathbf{z}_i^{e} = [u_i^{NE},v_i^{NE},w_i^{NE},\delta_{e_i}^{NE},\delta_{T_i}^{NE},\delta_{r_i}^{NE} ]^T 
 \end{equation}
 where $u_i^{NE}$, $v_i^{NE}$, $w_i^{NE}$, $\delta_{e_i}^{NE}$, $\delta_{T_i}^{NE}$, and $\delta_{r_i}^{NE}$ are the NRMSE of $u(t)$, $v(t)$, $w(t)$, $\delta_e(t)$, $\delta_T(t)$, and $\delta_r(t)$ of $i$th plant, respectively.
 The nominal and robust performances of the robust simultaneous
estimator are acceptable if $|| \mathbf{z}_i^e||_2~\forall~i~\in~\{1,\dots,4\}$ are closer to zero. The $||\mathbf{z}_i^e||_2$ of the  estimators are given in Table \ref{table:2}. The values shown in this table suggest that the  performance of the individual $H_\infty$ filters is the best followed by the GRMERS estimator. Moreover, the values of $||\mathbf{z}_i^e||_2$ associated with the GRMERS estimator is smaller than the values of the MERS estimator as indicated by Table \ref{table:2}. Following this, the reduction in the estimation error caused by the GR compensators with reference to the MERS estimator expressed as the percentage when the GRMERS estimator estimates $\mathbf{P}_1(s)$, $\mathbf{P}_2(s)$, and $\mathbf{P}_3(s)$ are 41.13~$\%$, 55.8344~$\%$, and 47.5410~$\%$, respectively. This suggests that a GRMERS estimator, formed by integrating  GR compensators and a MERS estimator, outperforms a sole MERS estimator. Note that the estimation error reduction by reducing the \textit{gap}  between the plants using  GR compensators for  $\mathbf{P}_4(s)$  is not required  as the design of GRMERS and MERS estimators   are  based on same plant, $\mathbf{P}_4(s)$.
\begin{table}[h!]
\caption{Nominal performance comparison between GRMERS, MERS and individual $H_\infty$ filters}
\begin{center}
\begin{tabular}{ |l|l|l|l|l| }
\hline
Estimator & 
 $\mathbf{P}_1(s)$ & $\mathbf{P}_2(s)$ &$\mathbf{P}_3(s)$&$\mathbf{P}_4(s)$ \\
\cline{2-5}
& $||\mathbf{z}_1^e||_2$ & $||\mathbf{z}_2^e||_2$ &$||\mathbf{z}_3^e||_2$&$||\mathbf{z}_4^e||_2$\\
\hline
GRMERS&3.85e-2 &3.52e-2&1.92e-2&4.9e-3\\
MERS&6.54e-2 &7.97e-2&3.66e-2&6.6e-3\\
$H_\infty$ filter &3.80e-4 &1.40e-4&1.14e-4&3.82e-4\\
\hline
\end{tabular}
\end{center}
\label{table:2}
\end{table}
\subsubsection{Robust performance analysis} 
Here, the robust performance of the GRMERS estimator is presented. For this purpose,  $\mathbf{P}_1(s)$, $\mathbf{P}_2(s)$, $\mathbf{P}_3(s)$, and $\mathbf{P}_4(s)$ are perturbed into $\mathbf{\bar{P}}_1(s)$, $\mathbf{\bar{P}}_2(s)$, $\mathbf{\bar{P}}_3(s)$, and $\mathbf{\bar{P}}_4(s)$, respectively, by inducing $8.5~\%$, $13~\%$, $10~\%$, and $5~\%$ parametric uncertainties into the system matrices of $\mathbf{P}_1(s)$, $\mathbf{P}_2(s)$, $\mathbf{P}_3(s)$, and $\mathbf{P}_4(s)$ such that $||\Delta A_i||_\infty <||\Delta A_{k4}||_\infty \forall~ i \in \{1,\dots,4\}$. The  state-space matrices of the perturbed plants are given the supporting material. The  robust performance of the GRMERS estimator   is compared with the  individual  $H_\infty$ filters designed around each plant belonging to $\mathcal{P}$. A  simulation setup similar to the one explained earlier is used to study the robustness of the GRMERS and the individual $H_\infty$ filters for each plant. Table \ref{table:3} shows the robust estimation performances of all the estimators and filters considered in this paper. This table indicates  that the robust estimation performance of  GRMERS estimator is better than the individual $H_\infty$ filters as the  $||\mathbf{z_i^e}||_2$ of the GRMERS estimator is lower than the $H_\infty$ filters. The  GRMERS estimator's estimation errors  are $17.86~\%$, $40.35~\%$, $41.13~\%$, and $43.00~\%$  smaller than $H_\infty$ filters  while estimating $\mathbf{\bar{P}}_1(s)$, $\mathbf{\bar{P}}_2(s)$ , $\mathbf{\bar{P}}_3(s)$, and $\mathbf{\bar{P}}_4(s)$, respectively.
 \begin{table}[h!]
\caption{Robust performance comparison between GRMERS,    and individual $H_\infty$ filters}
\begin{center}
\begin{tabular}{ |l|l|l|l|l| }
\hline
Estimator &$\mathbf{\bar{P}}_1(s)$ & $\mathbf{\bar{P}}_2(s)$ &$\mathbf{\bar{P}}_3(s)$&$\mathbf{\bar{P}}_4(s)$ \\
\cline{2-5}
& $||\mathbf{z}_1^e||_2$ & $||\mathbf{z}_2^e||_2$ &$||\mathbf{z}_3^e||_2$&$||\mathbf{z}_4^e||_2$\\
\hline
GRMERS&5.52e-2 &5.38e-2&4.39e-2&2.73e-2\\
$H_\infty$ filter &6.76e-2 &9.02e-2&8.10e-2&4.80e-2\\
\hline
\end{tabular}
\end{center}
\label{table:3}
\end{table}

\section{Conclusion}\label{cons}
In this paper, a novel robust simultaneous estimator referred to as the GRMERS estimator has been developed to estimate the states of a finite set of unstable MIMO plants of a NAV.  This GRMERS estimator comprises of a MERS estimator and GR compensators where the former provides robust simultaneous estimation with minimal largest worst-case estimation error and the latter reduces this estimation error further by decreasing the \textit{gap} between the \textit{graphs} of $N$ linear plants. For a given set of stable/unstable plants, a sufficient condition for the existence of a MERS estimator has been presented using LMIs and robust estimation theory. Two separate non-convex tractable optimization problems, one for the solution of the sufficient conditions and the other to obtain the GR compensators, are formulated in terms of LMIs using robust estimation theory and $\nu$-gap metric, respectively. The solutions for these optimization problems are obtained using two GA-based iterative algorithms. The tractability of these algorithms is successfully demonstrated by the generation of a feasible MERS estimator and GR compensators for four unstable plants of a typical fixed-wing NAV. The simulation results highlight that the GRMERS estimator is easily implementable in a typical NAV, and its performance is within the acceptable limit. Further, the 2-norm of normalized error of the GRMERS estimator is lower than that of the MERS estimator, which indicates that the GR compensators are effective in reducing the simultaneous estimation errors. The nominal and robust performance of the GRMERS estimator is compared with the individually designed $H_{\infty}$ filters.  The performance of GRMERS and individual $H_{\infty}$ filters are evaluated for both nominal and perturbed plants, and the results indicate that the GRMERS estimator is robust under perturbation than $H_{\infty}$ filters and GRMERS estimator provides satisfactory performance for all nominal plants. The novel GRMERS estimator  is ideal for implementation in  computational-resource constrained systems.

\end{document}